\definecolor{darkblue}{rgb}{0,0,0.7}
\newcommand{\ip}[2]{\left\langle {#1}, {#2} \right\rangle}
\newcommand{\bdsb}[1]{\boldsymbol{#1}}
\newcommand{\dtau}{\Delta \tau}
\newcommand{\dw}{\Delta \omega}
\newtheorem{thm}{Theorem}
\newtheorem{lem}{Lemma}
\newtheorem{prop}{Proposition}
\newtheorem{definition}{Definition}
\newtheorem{cor}{Corollary}
\title{Compressive Link Acquisition in Multiuser Communications}
\author{
Xiao Li$^{\dagger}$, Andrea Rueetschi$^{\dagger}$, Anna Scaglione$^{\dagger}$ and Yonina C. Eldar$^{\ast}$
\thanks{$^{\dagger}$The authors are with the Department of Electrical and Computer Engineering, University of California, Davis 95616, USA, email : \{eceli,andrear,ascaglione\}@ucdavis.edu}
\thanks{$^{\ast}$Y. C. Eldar is with Department of Electrical Engineering, Technion, Israel Institute of Technology, Haifa 32000, Israel, email : yonina@ee.technion.ac.il}
}
\begin{document}
	
% simulation plots
\graphicspath{{./figure/}{./sim_plots/}{./sim_draw}}

\maketitle
\begin{abstract}
An important receiver operation is to detect the presence specific preamble signals with unknown delays in the presence of scattering, Doppler effects and carrier offsets. This task,  referred to as ``link acquisition'', is typically a sequential search over the transmitted signal space. Recently, many authors have suggested applying sparse recovery algorithms in the context of similar estimation or detection problems. These works typically focus on the benefits of sparse recovery, but not generally on the cost brought by compressive sensing. Thus, our goal is to examine the trade-off in complexity and performance that is possible when using sparse recovery. To do so, we propose a sequential sparsity-aware compressive sampling (C-SA) acquisition scheme, where a compressive multi-channel sampling (CMS) front-end is followed by a sparsity regularized likelihood ratio test (SR-LRT) module.

The proposed C-SA acquisition scheme borrows insights from the models studied in the context of sub-Nyquist sampling, where a minimal amount of samples is captured to reconstruct signals with Finite Rate of Innovation (FRI). In particular, we propose an A/D conversion front-end that maximizes a well-known probability divergence measure, the average Kullback-Leibler distance, of all the hypotheses of the SR-LRT performed on the samples. We compare the proposed acquisition scheme vis-\`{a}-vis conventional alternatives with relatively low computational cost, such as the Matched Filter (MF), in terms of performance and complexity. Our experiments suggest that one can use the proposed C-SA acquisition scheme to scale down the implementation cost with greater flexibility than MF architectures. However, we find that they both have overall complexities that scale linearly with the search space despite of the compressed samples. Furthermore, it is shown that compressive measurements used in the SR-LRT at low SNR lead to a performance loss as one could expect given that they use less observations, while at high SNR on the other hand, the SR-LRT has better performance in spite of the compression.
\end{abstract}
\begin{keywords}
Multiuser communications, compressed sensing, detection and estimation, Kullback-Leibler distance.
\end{keywords}

\section{Introduction}
One of the critical receiver tasks in a multiuser scenario, referred to as {\it link acquisition}, is that of detecting the presence of signals, and identifying the {\it link parameters} (e.g., delays, carrier offsets) of an {\it unknown subset} $\mathcal{I}$ of active sources out of $I$ possible ones.  Similar to \cite{fletcher2009off,applebaum2011asynchronous}, we consider the case in which the set of active users $\mathcal{I}$ transmit known and distinct training preambles $\phi_i(t)$, $i\in\mathcal{I}$. Usually these preambles are designed to be fairly long so that their energy harvested at the receiver can rise above the noise. In this initial phase, the receiver is completely agnostic about the existing sources and tests the received signal $x(t)$ until it detects the presence of such signals, to establish the active user links. This needs to be done by accumulating observations and repeating the test sequentially. The acquired link information is essential for identifying the basic features of the received signal, so that the receiver can determine if it can decode the data after the training phase \cite{verdu1998multiuser,xie2010reduced} and refine the link parameter estimates using mid-ambles and decoded data. The term ``link acquisition" is equivalent to resolving the received signal space, which is characterized by the propagation delays and the carrier offsets. {Because we rely on sparsity of active sources in our problem, the scenario that is best suited for this is a downlink setting.}

\subsection{Related Works on Link Acquisition of Multiuser Signals}
We can categorize the algorithms that are used for link acquisition into two main groups. The first category acquires a {\it sufficient statistic} by directly sampling $x(t)$ at (or above) the Nyquist rate. The likelihood function associated with the sufficient statistic is then exploited to detect the presence of signals and determine the link parameters in the model given the set of active users $\mathcal{I}$. We refer to such techniques as Direct Sampling (DS) methods (e.g. \cite{sirkeci2004signal,buzzi2003code, qiu2001fast, wang2005decorrelating}).

A second approach, referred to as the Matched Filtering (MF) \cite{fishler2006spatial,tian2005glrt,xie2010reduced}, facilitates the search of both the {\it active set} $\mathcal{I}$ and link parameters by comparing the filtered outputs of the signal $x(t)$ from a bank of filters constructed by shifting and modulating the preamble $\phi_i(t-\tau)e^{-\mathrm{i}\omega t}$, each matching a sufficiently wide collection of points in the full parameter set $\mathcal{T}\times\mathcal{F}$ where $\tau\in\mathcal{T}$ and $\omega\in\mathcal{F}$ are the delay and Doppler spread respectively. MF is a prevalent choice in hardware implementations because of its simplicity. The MF approach can be implemented in the digital domain, where samples are projected onto the sampled version of  $\phi_i(t-\tau)e^{-\mathrm{i}\omega t}$, or in the analog domain, where the receiver performs filtering operations onto the templates $\phi_i(t-\tau)e^{-\mathrm{i}\omega t}$ directly in hardware. Specific details on these architectures are provided in Section \ref{sampling_acquisition}.

Classical algorithms take little advantage of the low dimensionality of the received signal space in storing and processing the observations to improve the performance or reduce complexity. Recently, there have been advances in exploiting {\it sparsity}, or the low dimensionality of the signal space, to improve receiver performance. One class of papers suggests using sparse signal recovery for the purpose of either detection or estimation. For instance, assuming that the signal is present, the results in \cite{fletcher2009off,applebaum2011asynchronous,xie2010reduced,zhu2011exploiting} deal with identification of the active users and/or estimation of signal parameters by creating a dictionary from the known templates $\phi_i(t)$ and viewing the signal $x(t)$ as a sparse linear combination of these element templates inside the dictionary. Without knowledge of signal presence within a specific observation window, the proposed detection schemes in \cite{duarte2006sparse,davenport2010signal,haupt2007compressive,wang2008subspace,paredes2009compressive} use generic compressed measurements to detect the presence of certain signals, starting from an abstract discrete model.  We refer to this general class of methods as the Sparsity-Aware (SA) approach. In these papers, delays and carrier offsets are not explicitly considered and the discrete observations are treated independently as a single snapshot from certain linear models, upon which SA algorithms are applied.

\subsection{Multiuser Signals with Finite Rate of Innovation (FRI)}
What is often neglected in existing DS and SA approaches is the acquisition of informative low rate discrete samples from the analog domain. As we mentioned, preamble sequences are usually fairly long and the receiver needs to sample the signal $x(t)$ at a fast rate and store them prior to processing. This can become a bottleneck in designing preambles so that they have the appropriate energy.

Reducing the sampling rate and the associated storage incurred at the A/D front-end is mostly the concern of another broad class of papers \cite{gedalyahu2010time,bajwa2011identification,vetterli2002sampling,maravic2005sampling,gedalyahu2011multichannel} on signals with a Finite Rate of Innovation (FRI) \cite{uriguen2011sampling}. In general, an FRI model has a {\it sparse} parametric representation. { Given the preamble $\phi_i(t)$ for each active user $i\in\mathcal{I}$ traveling through $R$ multipath channels, the class of signals $x(t)$ lies in a subspace with no more than $|{\mathcal I}|R$ dimensions, where each dimension has three unknowns (e.g., delay, carrier offset, channel coefficient), irrespective of its bandwidth and duration.}

The premier objective of FRI sampling is A/D conversion at sub-Nyquist rates for deterministic signal reconstructions. This objective is fundamentally different from what is of practical interest in link acquisition, which is to perform statistical inference. In this paper, we wish to harness similar cost reductions as in the FRI literature, while mitigating the detection performance losses that arise in the presence of noise due to the reduced number of observations. To this aim, we formulate the link acquisition problem as a Sparsity Regularized Likelihood Ratio Test (SR-LRT) that uses samples from a compressive multi-channel sampling (CMS) architecture. To enhance the acquisition performance, we optimize the sampling front-end by choosing sampling kernels that maximize a probability divergence measure of all the hypotheses in the test. We refer to the proposed link acquisition scheme as the Sparsity-Aware Compressive Sampling (C-SA) acquisition scheme. More specifically, we discuss in this paper
\begin{enumerate}
	\item a unified low-rate A/D conversion front-end using the proposed CMS architecture;
	\item a SR-LRT that uses compressive samples from the CMS architecture for sequential joint detection and estimation;
	\item the optimization of the CMS architecture for maximum average Kullback-Leibler (KL) distance of the SR-LRT;  	
	\item the comparison of the proposed C-SA scheme with the MF approach in terms of performance and costs.
\end{enumerate}	
This bridges the results pertaining to sparsity-aware estimation/detection \cite{applebaum2011asynchronous,fletcher2009off,xie2010reduced,duarte2006sparse,davenport2010signal,haupt2007compressive,wang2008subspace,paredes2009compressive}, the literature on analog compressed sensing and sub-Nyquist sampling \cite{tian2005glrt,gedalyahu2010time,wang2005decorrelating,bajwa2011identification, eldar2009compressed,xie2010reduced} and FRI sampling \cite{vetterli2002sampling,maravic2005sampling,uriguen2011sampling} such that sampling and acquisition operations are considered jointly.

To measure the benefits of the proposed C-SA scheme over other schemes, we analyze the practical trade-off between the implementation costs in physically acquiring samples and those invested computationally in sparse recovery. This is important to clarify the potential benefits of sub-Nyquist architectures in communication receivers in the link acquisition phase. These schemes often benefit from the denoising capabilities of SA algorithms (as well documented in \cite{applebaum2011asynchronous,fletcher2009off,xie2010reduced,duarte2006sparse,davenport2010signal,haupt2007compressive,wang2008subspace,paredes2009compressive}) but must loose sensitivity due to the fact that they do not use sufficient statistics for the receiver inference.

The question we consider is, therefore, {\it what is there to gain: implementation costs or performance?} Our numerical experiments indicate that the main advantage of the proposed scheme is that it enables the designer to find an adequate operating point for link acquisition such that processing requirements and complexity of the receiver can be reduced to an acceptable level without significantly sacrificing acquisition performance compared with the MF architecture. We also confirm numerically that the compressive samplers proposed in the CMS architecture harvest highly informative samples for the SR-LRT in terms of estimation and detection performance.

{ \subsection{Notation and Paper Organization}
We denote vectors and matrices by boldface lower-case and boldface upper-case symbols and the set of real (complex) numbers by $\mathbb{R}$ ($\mathbb{C}$). We denote sets by calligraphic symbols, where the intersection and the union of two sets $\mathcal{A}$ and $\mathcal{B}$ are written as $\mathcal{A}\bigcap\mathcal{B}$ and $\mathcal{A}\bigcup\mathcal{B}$ respectively. The operator $|\mathcal{A}|$ on a discrete (continuous) set takes the cardinality (measure) of the set. The magnitude of a complex number $x$ is denoted by $|x|=\sqrt{xx^\ast}$, where $x^\ast$ is the conjugate of the complex number $x$. The transpose, conjugate transpose, and inverse of a matrix $\mathbf{X}$ are denoted by $\mathbf{X}^T$, $\mathbf{X}^H$ and $\mathbf{X}^{-1}$, respectively. The inner products between two vectors $\mathbf{x},\mathbf{y}\in\mathbb{C}^{N\times 1}$ and between two continuous functions $f(t),g(t)$ in $L_2(\mathbb{C})$ are defined accordingly as $\ip{\mathbf{x}}{\mathbf{y}} = \sum_{n=1}^N y_n^\ast x_n$ and $\ip{f(t)}{g(t)}=\int_{-\infty}^{\infty}g^\ast(t)f(t)\mathrm{d}t$. The $\mathbf{W}$-weighted $\ell_2$-norm of a vector $\mathbf{x}$ {with a positive definite matrix $\mathbf{W}$} is denoted by $\left\|\mathbf{x}\right\|_{\mathbf{W}}=\sqrt{\mathbf{x}^H\mathbf{W}\mathbf{x}}$, and the conventional $\ell_2$-norm is written as $\left\|\mathbf{x}\right\|$. The $L_2$-norm of a continuous-time signal $f(t)\in L_2(\mathbb{C})$ is computed as $\left\|f(t)\right\|=\sqrt{\ip{f(t)}{f(t)}}$. %The support set $\mathrm{supp}(\mathbf{x})$ of a vector $\mathbf{x}\in\mathbb{C}^{N\times 1}$ is defined as the index set of the non-zero elements in $\mathbf{x}$, where $\mathrm{supp}[\mathbf{x}] = \{i:|x_i| \neq 0, i=1,\cdots,N\}$.
}

The paper is organized as follows. Section \ref{model_based_link_acquisition} introduces our received signal model. We discuss related works on link acquisition in Section \ref{sampling_acquisition}. The CMS architecture for compressive acquisition is considered in Section \ref{compressive_acquisition}. Using the compressive samples obtained from the CMS architecture, we develop the SR-LRT algorithm for C-SA link acquisition scheme in Section \ref{SR_GLRT}. We then optimize the compressive samplers in the CMS in Section \ref{optimization_B}. Simulations demonstrating the performance are presented in Section \ref{numerical_results}. The overall cost of the proposed C-SA scheme is compared against conventional MF schemes in terms of storage and computational costs in Section \ref{complexity}.

\section{Signal Model for Link Acquisition}\label{model_based_link_acquisition}
In every communication standard, a key control sequence in the training phase is the initial preamble. The receiver models the corresponding observation by assuming that each $i\in\mathcal{I}$ from the unknown active set transmits a specific preamble $\phi_i(t)$. This transmission is followed by the mid-ambles and data frames. A common choice for such a preamble in multiuser communications is a linearly pulse modulated sequence with a chip rate $1/T$ { close to the signal bandwidth and equal to the minimum Nyquist-rate}
\begin{align}\label{preamble}
	\phi_i(t) = \sum_{m=0}^{M-1}a_i[m]g\left(t-mT\right).
\end{align}
Here $g(t)$ is the pulse shaping filter (chip) and $\{a_i[m]\}_{m=1}^M$ is typically a long preamble sequence $M\gg 1$ for each user.

\subsection{Received Signal Model}
Then the observation at the receiver can be written as
\begin{align}\label{eq.orig}
	x(t) = \sum_{i\in\mathcal{I}} \sum_{r=1}^{R} h_{i,r}\phi_i(t-t_{i,r})e^{\mathrm{i}\omega_{i,r}t} + v(t),
\end{align}
where $t_{i,r}$ is the unknown propagation delay of the $i$th user in the $r$th multipath, $|\omega_{i,r}|\leq \omega_{\rm max}$ is the Doppler frequency upper bounded by the maximum Doppler spread $\omega_{\rm max}$, and $h_{i,r}$ is the unknown channel fade. Without loss of generality, we assume that the maximum multipath order $R$ is known and the noise component $v(t)$ is a white Gaussian process with $\mathbb{E}\{v(t)v^\ast(s)\}=\sigma^2\delta(t-s)$. Our problem is to detect the presence of the active user set $\mathcal{I}$ and the corresponding link parameters $\{h_{i,r},t_{i,r},\omega_{i,r}\}$ for $i\in\mathcal{I}$ and $r=1,\cdots,R$.

%\begin{figure*}
%\centering
%\includegraphics[width=0.8\linewidth]{}
%\caption{Sliding observation window for sequential acquisition}\label{signal_arrival}
%\end{figure*}

%\subsection{Sequential Structure for Acquisition}
Since the propagation delays $t_{i,r}$ are unknown and possibly large, the typical A/D front-end for link acquisition is sequential. The acquisition scheme produces test statistics every $D$ units of time, where $D$ is the shift in the time reference for detections. At every shift $t=nD$, the receiver decides whether the signal $x(t)$ is present at or after $t=nD$. 
%\begin{definition}
For convenience, we denote $t_0=\min \{t_{i,r}\}_{i\in\mathcal{I}}^{r=1,\cdots,R}$ as the delay of the first arrival path among all users. Let
\begin{align}\label{ell}
    \ell = \left\lfloor t_0/D \right\rfloor
\end{align}
be the shift that matches best with signal arrival and
\begin{align}\label{delay_decomp}
    \tau_{i,r} = t_{i,r} - \ell D \geq 0
\end{align}
be the {\it composite delay}, where $0\leq\tau_{i,r}\leq \tau_{\rm max}$ and $\tau_{\max}$ is the composite delay spread. {Note that $\tau_{i,r}=(t_0-\ell D)+(t_{i,r}-t_0)$, where the first term is the fractional delay within $[0,D)$ while the second term is a multipath delay relative to the first arrival path, which is bounded by the channel delay spread $\tilde \tau_{\max}$. This implies that  $\tau_{i,r}\leq D+\tilde \tau_{\max}$.
%This composite delay consists of the {\it fractional delay} $t_0-\ell D$ of the first arrival path in reference to the $\ell$th shift $D$, and the {\it multipath delay} $t_{i,r}-t_0$ resulted from the dispersion and scattering of channels in reference to the first arrival path. Note that the fractional delay $t_0-\ell D$ can fall anywhere within $[0,D)$ and the multipath delay is bounded by the channel delay spread $t_{i,r}-t_0\leq \tilde{\tau}_{\max}$, thus the composite delay spread is $\tau_{\max} = D+\tilde{\tau}_{\max} > D$. 
Given the multipath delay spread $\tilde{\tau}_{\max}$ and the shift size $D$, we can obtain the composite delay spread $\tau_{\max}=D+\tilde \tau_{\max}$ as the search space to fully capture the signal at least in the $\ell$th shift.} 

This allows us to express \eqref{eq.orig} equivalently as
\begin{align}\label{orig_sig}
	x(t) &= \sum_{i\in\mathcal{I}} \sum_{r=1}^{R} h_{i,r} \phi_i(t-\ell D-\tau_{i,r}) e^{\mathrm{i}\omega_{i,r}t} + v(t).
\end{align}
After these considerations, it is clear that the search spaces of delays and Doppler frequencies for each shift $n$ are respectively $\mathcal{T}\triangleq[0,\tau_{\max}]$ and $\mathcal{F}\triangleq[-\omega_{\max},\omega_{\max}]$.

\subsection{Goal of Link Acquisition}
Link acquisition is typically formulated as composite hypothesis tests with unknown link parameters, where the likelihood ratio between the signal hypothesis and the noise hypothesis is the test statistic for the detection task. Note that there could be multiple values of $n\neq \ell$ that lead to valid positive detections, where for a given $\ell$, the relative composite delay with respect to the $n$th shift would be
\begin{align}\label{modified_delay}
    \tau_{i,r}^{(n)}=\tau_{i,r}+(\ell-n)D.
\end{align}
{Therefore, when the signal is captured in an earlier shift $n<\ell$, the relative composite delay would be greater than $\tau_{i,r}$, and if it is captured in a later shift $n>\ell$ the relative composite delay would be smaller than $\tau_{i,r}$.} In order to single out the best reference shift, the receiver will have to compare a sequence of $N_0$ test statistics after the first positive detection at $n=N_\eta$, and choose the particular shift $\ell_\star$ that maximizes the likelihood ratio. We call $\ell_\star$ the \emph{maximum likelihood ratio} (MLR) shift. {The look-ahead horizon $N_0$ can be chosen considering the type of sampling kernels, the preambles $\phi_i(t)$'s, and the delay spread $\tau_{\max}$, making reasonable approximations about the durations of the signals\footnote{To give a rule-of-thumb, suppose that the receiver streams the observation by $D$ units of time in every shift and in each shift, a portion of the analog signal $x(t)$ with length $T_{\rm observe}$ is observed by the receiver. Note that $T_{\rm observe}>D$ is required such that no portions of the signal $x(t)$ is missed in any shift. Let the temporal support of the preambles be finitely limited by $T_{\rm preamble}$ and it satisfies $T_{\rm observe}\geq T_{\rm preamble}+\tau_{\max}$, then the look ahead horizon can be chosen as $N_0=\lceil T_{\rm observe}/D\rceil.$ }}.

\begin{definition}
{\bf Link acquisition} refers to
\begin{enumerate}
	\item locating the MLR shift $\ell_\star$;
	\item identifying the set of active users $\mathcal{I}$ in the $\ell_\star$th shift;
	\item resolving the delay-Doppler pairs $\{\widehat{\tau}_{i,r},\widehat{\omega}_{i,r}\}$ for $i\in\mathcal{I}$ and $r=1,\cdots,R$.
\end{enumerate}
\end{definition}

{ Usually, the preamble signals $\phi_i(t)$'s have large energy, so that they can rise above the receiver noise. Given that the average power is constant, the $\phi_i(t)$ typically last much longer (i.e., $M$ is large) than subsequent mid-ambles or spreading codes that modulate data. For a typical wireless application such as GPS or IS-95/IMT-2000, transmitters continuously send out preamble sequences with length on the order of $M = 20 \times 1023$ \cite{qaisar2011cross} or $M = 32768$ \cite{kim2001robust}, respectively. This means that in order to detect the presence of such preambles and acquire the synchronization parameters, architectures using DS, MF or D-SA approaches would have to store a large amount of data to process in a sequential manner.} This phase is crucial to properly initialize any channel tracking that ensues. In Section \ref{sampling_acquisition}, we provide details on the A/D architectures and the corresponding post-processing for conventional link acquisition schemes. We then present the proposed CMS architecture and the C-SA acquisition scheme in Section \ref{compressive_acquisition}.

\section{Existing Architectures for Link Acquisition}\label{sampling_acquisition}

For future use, we let the Nyquist rate of the signal $x(t)$ be $f_{\textrm{\tiny NYQ}} = 2\mathcal{W} +\omega_{\max}/\pi$ with $\mathcal{W}$ being the maximum single-sided bandwidth of $\phi_i(t)$, $i=1,\cdots,I$. %If the preambles in \eqref{preamble} are used, then given the maximum delay spread $\tau_{\max}$, the joint signal support $\mathbb{X}_\epsilon$ is obtained as
%\begin{align}
%	\mathbb{X}_\epsilon = \left[-\left(\frac{M}{2}+L_g\right)T,\left(\frac{M}{2}+L_g\right)T+\tau_{\max}\right],
%\end{align}
%where $L_g$ is the number of truncated main-lobes of the filter $g(t)$ on one side.

\subsection{Direct Sampling (DS)}
In DS schemes, the received analog signal $x(t)$ is sampled by projecting it onto an ideal series of Dirac's deltas, every $T_s = 1/f_s \leq 1/f_{\textrm{\tiny NYQ}}$, i.e.
\begin{align}
	c_{\textrm{\tiny DS}}[w] = \ip{x(t)}{\delta(t-wT_s)} = x\left(wT_s\right).
\end{align}
At the $n$th shift, DS schemes use the most recent ${{W}}$ Nyquist samples for every shift $D=NT_s$
\begin{align}\label{nyq_freq}
	\mathbf{c}_{\textrm{\tiny DS}}[n] = \left[c_{\textrm{\tiny DS}}[nN],\cdots,c_{\textrm{\tiny DS}}[nN+({{W}}-1)]\right]^T
\end{align}
to perform the detection. {This is a sliding window operation of $W$ samples where in each shift, the most obsolete $N$ samples are replaced with the latest $N$ samples, where the number of sample per shift satisfies $W\geq N$ such that no samples are missed between shifts}. Based on \eqref{eq.orig}, the samples $\mathbf{c}_{\textrm{\tiny DS}}[n]$ can be expressed as
\begin{align}\label{oversampling}
	\mathbf{c}_{\textrm{\tiny DS}}[n] &= \bdsb{\Phi}_{\mathcal{J}}\left(\bdsb{\tau}_{\mathcal{J}},\bdsb{\omega}_{\mathcal{J}}\right)\mathbf{h}_{\mathcal{J}} + \mathbf{v}[n],
\end{align}
where the parameters
\begin{align}
	\bdsb{\tau}_{\mathcal{J}} &\triangleq[\cdots,\tau_{j,1}^{(n)},\cdots,\tau_{j,R}^{(n)},\cdots]^T\\
	\bdsb{\omega}_{\mathcal{J}} &\triangleq[\cdots,\omega_{j,1},\cdots,\omega_{j,R},\cdots]^T\\
	\mathbf{h}_{\mathcal{J}} &\triangleq[\cdots, h_{j,1},\cdots,h_{j,R},\cdots]^T
\end{align}
represent the $|\mathcal{J}|R$ residual delays, Doppler and channel coefficients corresponding to the set of users $\mathcal{J}\subseteq\{1,\cdots,I\}$ in the $n$th shift, and the set of active users $\mathcal{J}$ {varies with $n$ depending on which user component is captured in that shift. We omit the argument $n$ to make the notations lighter.} The vector $\mathbf{v}[n]$ contains the noise samples $\left[\mathbf{v}[n]\right]_w=v(nD + wT_s)$, and $\bdsb{\Phi}_{\mathcal{J}}\left(\bdsb{\tau}_{\mathcal{J}},\bdsb{\omega}_{\mathcal{J}}\right)$ is a ${{W}}\times |\mathcal{J}|R$ sub-matrix of the complete ${{W}}\times IR$ matrix $\bdsb{\Phi}\left(\bdsb{\tau},\bdsb{\omega}\right)$, from which we extract columns $j\in\mathcal{J}$. The full matrix $\bdsb{\Phi}\left(\bdsb{\tau},\bdsb{\omega}\right)$ is defined by
%%%% single column %%%%%%
%\begin{align}
%	 &\left(\bdsb{\Phi}^{[n-\ell]}\left(\bdsb{\tau},\bdsb{\omega}\right)\right)_{p,(i-1)R+r}
%	\triangleq
%	\phi_i\left((n-\ell)D+ pT_s - \tau_{i,r}\right)e^{j\omega_{i,r} (nD+pT_s)},\nonumber
%\end{align}
\begin{align}
	 \Big[\bdsb{\Phi}\left(\bdsb{\tau},\bdsb{\omega}\right)\Big]_{w,(i-1)R+r}
	\triangleq
	\phi_i\left(wT_s - L_gT - \tau_{i,r}\right)e^{\mathrm{i}\omega_{i,r} wT_s},\nonumber
\end{align}
where {$L_gT$ is the duration of $g(t)$ on each of both sides}\footnote{In general, a pulse has finite durations only if $g(t)$ is not bandlimited. If the pulse is bandlimited, the pulse $g(t)$ consists of side-lobes of length $T$ and is usually truncated. The parameter $L_g$ in this case will be chosen so that a the model has modest approximation error.}. %{ If the preamble \eqref{preamble} is used, the number of samples $W$ is chosen sufficiently large $WT_s\geq (M+2L_g)T+\tau_{\max}$ to cover a sufficient portion of the signal, leaving out data which carry negligible useful information about the options considered possible in the inference}.

Using the $n$th shift of observations, link acquisition amounts to performing the following composite hypothesis test
\begin{align*}
	\mathcal{H}_{\mathcal{J}} : \mathbf{c}_{\textrm{\tiny DS}}[n] &= \bdsb{\Phi}_{\mathcal{J}}\left(\bdsb{\tau}_{\mathcal{J}},\bdsb{\omega}_{\mathcal{J}}\right)\mathbf{h}_{\mathcal{J}} +\mathbf{v}[n],\\
    \mathcal{H}_{\varnothing} : \mathbf{c}_{\textrm{\tiny DS}}[n] &= \mathbf{v}[n],
\end{align*}
with unknown parameters $\mathcal{J}$, $\bdsb{\tau}_{\mathcal{J}}$, $\bdsb{\omega}_{\mathcal{J}}$ and $\mathbf{h}_{\mathcal{J}}$. %Clearly, $\mathcal{H}_{\varnothing}$ is true when $n\notin\mathcal{N}$ and one of the possible $\mathcal{H}_{\mathcal{J}}$'s is true when $n\in\mathcal{N}$, determined by $N_0= \left\lfloor {{W}}T_s / D \right\rfloor$.
The Generalized Likelihood Ratio Test (GLRT) is then typically used. This test requires solving the following non-linear least squares estimation (NLLSE) problem
\begin{align*}
	 \left\{\widehat{\mathcal{I}},\bdsb{\widehat{\tau}}_{\widehat{\mathcal{I}}},\bdsb{\widehat{\omega}}_{\widehat{\mathcal{I}}}, \mathbf{\widehat{h}}_{\widehat{\mathcal{I}}}\right\} =
	 \underset{\mathcal{J},\bdsb{\tau}_{\mathcal{J}},\bdsb{\omega}_{\mathcal{J}},\mathbf{h}_{\mathcal{J}}}{\arg\min} ~ \left\|\mathbf{c}_{\textrm{\tiny DS}}[n] - \bdsb{\Phi}_{\mathcal{J}}\left(\bdsb{\tau}_{\mathcal{J}},\bdsb{\omega}_{\mathcal{J}}\right)\mathbf{h}_{\mathcal{J}}\right\|,
\end{align*}
over all possible $\mathcal{J}$, $(\bdsb{\tau}_{\mathcal{J}},\bdsb{\omega}_{\mathcal{J}})\in\mathcal{T}^{|\mathcal{J}|}\times\mathcal{F}^{|\mathcal{J}|}$, $\mathbf{h}_{\mathcal{J}}\in\mathbb{C}^{|\mathcal{J}|}$ to compute the generalized likelihood ratio
\begin{align}\label{likelihood_ratio_GLRT}
	\eta_{\textrm{\tiny DS}}(n)
	=
	 \frac{\mathbb{P}\left(\mathcal{H}_{\widehat{\mathcal{I}}}\right)}{\mathbb{P}\left(\mathcal{H}_{\varnothing}\right)}
\end{align}
with estimates $\left\{\widehat{\mathcal{I}},\bdsb{\widehat{\tau}}_{\widehat{\mathcal{I}}},\bdsb{\widehat{\omega}}_{\widehat{\mathcal{I}}}, \mathbf{\widehat{h}}_{\widehat{\mathcal{I}}}\right\}$ obtained at every shift $t=nD$. The expression of the generalized likelihood ratio is given in \cite{kayfundamentals} for cases when the noise variance $\sigma^2$ of $\mathbf{v}[n]$ is known and unknown. Using the corresponding ratio as test statistics, the receiver checks if the test statistic satisfies $\eta_{\textrm{\tiny DS}}(n)\geq \eta_0$ for some properly chosen threshold $\eta_0$. Without loss of generality, we consider the most general case where $\sigma^2$ is unknown. In this case, the generalized likelihood ratio is obtained as
\begin{align}
	\eta_{\textrm{\tiny DS}}(n)
	%=
	% \frac{\mathbb{P}\left(\mathcal{H}_{\widehat{\mathcal{I}}}\right)}{\mathbb{P}\left(\mathcal{H}_{\varnothing}\right)}
	=
	\frac{\left\|\mathbf{c}_{\textrm{\tiny DS}}[n]\right\|^{2{{W}}}}{\left\|\mathbf{c}_{\textrm{\tiny DS}}[n] - \bdsb{\Phi}_{\widehat{\mathcal{I}}}\left(\bdsb{\widehat{\tau}}_{\widehat{\mathcal{I}}},\bdsb{\widehat{\omega}}_{\widehat{\mathcal{I}}}\right) \mathbf{\widehat{h}}_{\widehat{\mathcal{I}}}\right\|^{2{{W}}}}.
\end{align}
Denote the first shift that passes the GLRT as 
\begin{align}
	N_{\eta}\triangleq\min \left\{\underset{n}{\arg}~\eta_{\textrm{\tiny DS}}(n) \geq \eta_0\right\},
\end{align}
then the MLR shift $\ell_\star$ is given by
\begin{align}\label{GLRT_best_window}
	\ell_\star &= \arg\underset{n}{\max}
	~ \eta_{\textrm{\tiny DS}}(n),\quad n=N_{\eta},\cdots,N_{\eta}+N_0.
\end{align}

Obviously, the test described above is intractable in general, since there are $2^I$ hypotheses at each shift $t=nD$ to explore, and for each of them, there is an NLLSE problem to solve. Therefore in practice, DS acquisition schemes either deal with the known user case $\mathcal{J}=\mathcal{I}$ or assume the full set $\mathcal{J}=\{1,\cdots,I\}$ during detection, followed by NLLSE for that specific user set. When the set of active users $\mathcal{I}$ is unknown, alternatives are {\it Matched Filtering} (MF) and {\it Sparsity-Aware} (SA) approaches. The C-SA scheme in this paper is an instance of the SA technique, which performs sequential detection and estimation via the proposed SR-LRT using sub-Nyquist samples from the proposed CMS architecture. We next describe the MF approach and then the SA approach.

\subsection{Matched Filtering (MF)}\label{MF_approach}
The MF receiver is a widely used architecture in practice because of its ease in finding the link parameters by observing the outputs of the MF filterbank constructed from the {\it MF templates} $\phi_i(t) e^{\mathrm{i}k\Delta\omega t}$ for some $i$ and $k$. Since the size of the filterbank has to be finite, therefore it is usually assumed that $\tau_{i,r}\approx q_{i,r}\Delta \tau$  and  $\omega_{i,r}\approx k_{i,r}\Delta\omega$ for some integers $q_{i,r}$ and $k_{i,r}$ with a certain resolution $\Delta\tau=\tau_{\rm max}/Q$ and $\Delta\omega=\omega_{\rm max}/K$. The search spaces for the MF receiver then become $\mathcal{Q} = \{0,1,\cdots,Q-1\}$ and $\mathcal{K} = \{-K,\cdots,K\}$, which is the discrete counterpart of the continuous search space $\mathcal{T}\times\mathcal{F}$.

The MF receiver is a popular choice for multiuser acquisition \cite{verdu1998multiuser}, for example, in GPS receivers \cite{li2012gps} or CDMA receivers \cite{kim2001robust}. Its comparison with the C-SA acquisition scheme using the CMS architecture we propose in this paper is particularly insightful because, although the MF front-end requires a large filterbank, the post-processing of its outputs is very simple. One could certainly perform more complex post-processing to enhance its performance. For example, the Orthogonal Matching Pursuit (OMP) algorithm in our C-SA scheme can be applied on the MF outputs for this purpose. However, in that case, as illustrated in Section \ref{complexity}, the resulting scheme will have much higher storage cost and computational complexity requirements, which render the merits of the MF approach meaningless. More importantly, the OMP technique can be directly applied to the Nyquist samples, as done in SA methods, making the MF stage superfluous\footnote{Strictly speaking, the OMP technique performs a MF stage in its first iteration. The subsequent iterations can be viewed as applying successive interference cancellation (SIC) in multiuser communications.}.
%the MF receiver is a practical benchmark in terms of processing complexity and performance for link acquisition, and we compare our .

The MF obtains the decision statistics by passing $x(t)$ through a bank of $P=I|\mathcal{K}|$ MF templates, and sampling the outputs every $\Delta\tau$. To be consistent with the sequential structure in \eqref{orig_sig} and the DS method, the MF shifts its templates every $D={{N}}T_s$, and samples the outputs every $\Delta\tau = T_s \leq 1/f_{\textrm{\tiny NYQ}}$. The MF output corresponding to the $i$th user at the $k$th discrete frequency $\omega = k\Delta\omega$ is obtained as
\begin{align}
	c_{i,k}[w] = \ip{x(t)}{\phi_i(t-wT_s)e^{\mathrm{i}k\Delta\omega (t-wT_s)}}.
\end{align}
Oftentimes, the filtering process is implemented in the digital domain using the samples $\mathbf{c}_{\textrm{\tiny DS}}[n]$ in \eqref{nyq_freq}. For consistency, we proceed with the description in the analog domain. At the $n$th shift, the samples used for detections can be stacked into an $I|\mathcal{K}|\times |\mathcal{Q}|$ exhaustive MF output array
\begin{align}\label{sample_array}
	\mathbf{C}_{\textrm{\tiny MF}}[n]
	=
	\begin{bmatrix}
		\vdots & \cdots & \vdots \\
		c_{1,k}[n{{N}}] & \cdots & c_{1,k}[n{{N}}+Q-1]\\
		\vdots & \cdots & \vdots \\
		c_{I,k}[n{{N}}] & \cdots & c_{I,k}[n{{N}}+Q-1]\\
		\vdots & \cdots & \vdots
	\end{bmatrix}.
\end{align}
Then, the MF receiver uses $\mathbf{C}_{\textrm{\tiny MF}}[n]$ as test statistics and performs the test on each user as follows
\begin{align*}
	\left| c_{i,k}\left[nN + q\right] \right|\geq \rho_i,~i=1,\cdots,I,~k\in\mathcal{K},~q\in\mathcal{Q},
\end{align*}
where $\rho_i$ is the chosen detection threshold for each user.

Denote the active user set at the $n$th shift as $$\widehat{\mathcal{I}}=\left\{i: \left|c_{i,k}\left[nN + q\right] \right|\geq \rho_i,~\forall i,k,q\right\}$$ and the shift that triggers the first positive detection as
\begin{align}
	N_{\eta}\triangleq \min\left\{\underset{n}{\arg}~\left|c_{i,k}\left[nN + q\right]\right|\geq\rho_i,\forall i,k,q\right\}.
\end{align}
The MLR shift is obtained by locating the maximum output
\begin{align*}
	\ell_\star \triangleq \arg \underset{n}{\max}\left\{\underset{i,k,q}{\max}~\left|c_{i,k}\left[nN + q\right]\right|\right\}, 
\end{align*}
where $n=N_{\eta},\cdots,N_{\eta}+N_0$.

Given the multipath order $R$, the delay-Doppler pairs at the $n$th shift are pinpointed by the $R$ strongest outputs for all detected users $i\in\widehat{\mathcal{I}}$ over the search space $k\in\mathcal{K}$ and $q\in\mathcal{Q}$. For convenience, we denote the strongest path by $\left(k_{i,1},q_{i,1}\right)$ for the $i$th user $i\in\widehat{\mathcal{I}}$ and rank the outputs by magnitudes
\begin{align}\label{MF}
	& |c_{i,k_{i,1}}\left[\ell_\star N+ q_{i,1}\right]|
	> |c_{i,k_{i,2}}\left[\ell_\star N + q_{i,2}\right]|
	> \cdots \\
	&\hspace{3cm} > |c_{i,k_{i,R}}\left[\ell_\star N + q_{i,R}\right]| > \cdots\nonumber
\end{align}
for each user $i\in\widehat{\mathcal{I}}$. The delay-Doppler pairs are identified as
\begin{align}
	\mathcal{M}_i\triangleq \left\{ (k_{i,r}, q_{i,r}) : r=1,\cdots,R\right\},
\end{align}
which give the following link parameters
\begin{align*}
	\hat{\tau}_{i,r} = q_{i,r}\Delta\tau, ~\hat{\omega}_{i,r} = k_{i,r} \Delta \omega,\quad (k_{i,r}, q_{i,r})\in \mathcal{M}_i.
\end{align*}

Although the MF approach shows an advantage in its post-processing and implementation, it has a few drawbacks:
\begin{itemize}
	\item[i)] the size of the MF filterbank scales with the number of users $I$ and the parameter set $|\mathcal{K}|$;
	\item[ii)] digital implementation requires high rate processing, increasing storage and pipelining\footnote{Pipelining refers to timely processing of the samples that stream into the system per unit of time.} cost;
	\item[iii)] the MF samples $c_{i,k}[\cdot]$ contain the interference from different users and multipath components;
\end{itemize}
During { the link acquisition phase}, the effect of interference (iii) is mitigated by using wideband pulses $g(t)$. During the data detection phase, multipath and multiuser interferences are dealt with using a RAKE type receiver and the interference is tackled either by using linear multiuser receivers or, in some cases, using successive interference cancelation (SIC) or even maximum likelihood multiuser detection \cite{verdu1998multiuser}. Typically, the complexities of these schemes for data detection grow rapidly with respect to the size of the MF filterbank (i) and the sampling rate (ii). Since data detection is conducted after the link acquisition, the uncertainties about the set of active users, their delays and Doppler frequencies have already been resolved and therefore, these tasks become more manageable.

%their complexities grow rapidly with the size of the filterbank (i.e., $I$ and $|\mathcal{K}|$) and only works for small scale problems. Furthermore, the issues in (i) and (ii) are also exacerbated when the scale of the problem increases.

\begin{figure*}
\centering
\includegraphics[width=0.7\linewidth]{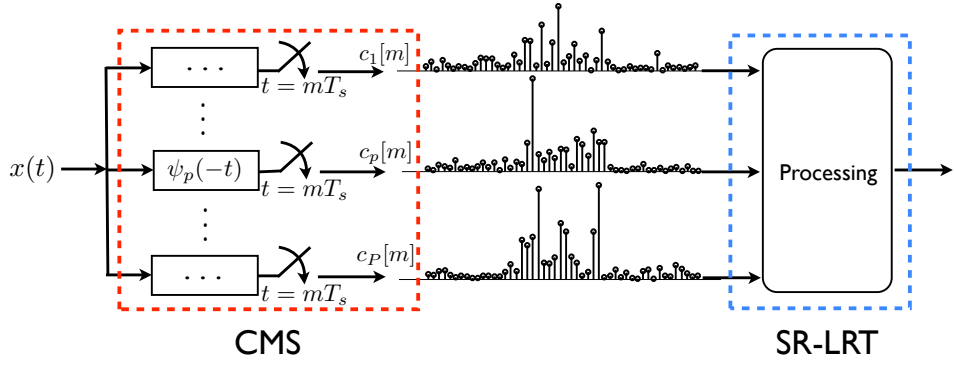}
\caption{The compressive samples obtained in the CMS architecture and the SR-LRT in the proposed C-SA acquisition scheme.}\label{CS}
\end{figure*}

\subsection{Sparsity Aware (SA) Approach}\label{sampling_acquisition_SA}
Instead of simply observing and ranking the MF outputs, many recent works have proposed the idea of {\it compressed sensing} or {\it sparse recovery} to solve estimation and detection problems. For the purpose of {\it user identification and parameter estimation}, one approach is to approximate \eqref{oversampling} by a sparse model with a dictionary constructed from the ensemble of possible templates $\phi_i(t)$ \cite{fletcher2009off,xie2010reduced} and/or discretized delays $\bdsb{\tau}$ \cite{applebaum2011asynchronous,daniele2010sparsity} (similar to the MF templates), where the joint recovery of active users and unknown parameters is relaxed as a {\it sparse estimation} problem. These sparse methods, which we call the sparsity-aware direct sampling (D-SA) scheme, usually require Nyquist rate samples and assume that the signal is already present (i.e., the MLR shift $\ell_\star$ is known). For clarity, D-SA scheme should not be confused with DS schemes, because DS schemes do not resort to sparsity-aware approaches based on discretization of analog parameters and require a non-linear search over the parameter space $\mathcal{T}\times\mathcal{F}$.

On the other hand, aiming at {\it signal presence detection} rather than identifying the active users and recovering the parameters, \cite{duarte2006sparse,davenport2010signal,haupt2007compressive,wang2008subspace,paredes2009compressive} reduce the number of samples required for the test by using a linear {\it compressor} on the block of given discrete observations. Since the compressor used in this class of detection schemes bears certain resemblances with the CMS architecture in the C-SA scheme we propose in this paper, we also categorize this method as the C-SA scheme just to avoid any confusion. Last but not least, in terms of the acquisition front-end, the D-SA scheme is a special case of C-SA scheme with a compressor that is an identity matrix.

% as opposed to the uncompressed D-SA scheme, which we will compare in the simulations.
A distinctive difference between the C-SA scheme in this paper and those in \cite{duarte2006sparse,davenport2010signal,haupt2007compressive,wang2008subspace,paredes2009compressive} is that the C-SA scheme proposed in this paper unifies the sequential signal detection, identification of active users and estimation of parameters by using the compressive samples obtained from a flexible multi-rate A/D architecture. On the other hand, the C-SA scheme in \cite{duarte2006sparse,davenport2010signal,haupt2007compressive,wang2008subspace,paredes2009compressive} directly starts from an abstract discrete model that is already sampled. Last but not least, the sampling kernels in the proposed CMS architecture are further optimized with respect to the estimation and detection performance in terms of the average KL distance of the hypotheses in the SR-LRT.

%In the following, we discuss
%\begin{enumerate}
%	\item a unified low-rate CMS architecture for the C-SA acquisition scheme using the proposed SR-LRT;
%	\item the sequential SR-LRT that jointly detects signal presence and recovers the active users with their parameters;
%	\item { the compressive samplers of the C-SA scheme that provide the maximum average Kullback-Leibler (KL) distance among the hypotheses in the SR-LRT};  	
%	\item the comparison of the proposed architecture with the MF approach in terms of performance, storage cost and computational complexity.
%\end{enumerate}	

\section{Compressive Sequential Link Acquisition}\label{compressive_acquisition}
\subsection{Compressive Multichannel Sampling (CMS)}
We propose to use the A/D front-end in Fig. \ref{CS}, typical in FRI sampling \cite{gedalyahu2010time,eldar2009compressed,vetterli2002sampling,maravic2005sampling,uriguen2011sampling} for this work, which samples the signal every $t=nD$ by a $P$-channel filterbank
\begin{align}\label{gen_sampling}
	c_p[n] &\triangleq \ip{x(t)}{\psi_p(t-nD)},\quad p=1,\cdots,P.
\end{align}
We call this architecture the Compressive Multichannel Sampling (CMS) module, which forms the A/D conversion front-end of the proposed C-SA acquisition scheme.

%The sampling rate of the CMS architecture is $P/D = P/(NT_s)$, which is well below the Nyquist rate $f_{\textrm{\tiny NYQ}}$ when $P$ is small.
Note that \eqref{gen_sampling} can also be implemented in the digital domain by performing linear projections of the discrete signal $\mathbf{c}_{\textrm{\tiny DS}}[n]$ in \eqref{oversampling}. This means that the CMS architecture becomes part of the post-processing of the Nyquist samples of $x(t)$, which lowers the storage and computation requirements as illustrated in Section \ref{complexity}. Similar derivations can be done in discrete time, but the advantage of using the analog description is that we do not necessarily have to target bandlimited signals. In the FRI literature \cite{uriguen2011sampling,vetterli2002sampling,maravic2005sampling,gedalyahu2010time,bajwa2011identification,eldar2009compressed}, in the absence of noise, the sampling rate required for the unique reconstruction of the signal in \eqref{orig_sig} is the { {\it number of degrees of freedom} of the signal $x(t)$ per shift $D$}, equal to the number of unknowns $\{\tau_{i,r},\omega_{i,r},h_{i,r}\}_{i\in\mathcal{I},r=1,\cdots,R}$. This amounts to $P_{\rm min}=3|\mathcal{I}|R$, which can be much less than what is needed in the MF approach $P_{\rm min} \ll P_{\rm MF} ={ I|\mathcal{K}|}$ when the number of active users is not large  $|\mathcal{I}|\ll I$, or the number of multipaths $R$ is much less than the dimension of the search space for Doppler $|\mathcal{K}|$. However, since the estimation and detection are performed in the presence of noise, the number of $P$ needs to be increased in general to enhance the sensitivity of the receiver. This gives the option of trading off accuracy with { storage cost and} computational complexity, by adjusting the number of samples $P$ to process between ${P_{\rm min}}  \leq {P}  \leq {P_{\rm MF}}$.

%{\color{red} The table reference does not work}.

%{ As a summary, we provide the architectural parameters of different schemes (DS, MF, SA, CMS) in Table \ref{acquisition_scheme}}.

%
%\begin{figure*}[t]\label{tab0:architecture}
%%\begin{table}
%\centering
%\begin{tabular}{l*{6}{l}r}
%\toprule
%\toprule
%%\textbf{ }              &  &  \\
%\cmidrule(r){1-1}
%\textbf{Scheme}     &  \textbf{Task} & \textbf{Sampling Rate}  & \textbf{Samples} (per shift) & \textbf{Filterbank Size}\\
%\midrule
%DS      & Estimation/Detection with known users  &  $f_{\textrm{\tiny NYQ}}$   & $W$ &  $1$\\
%MF      & Estimation/Detection with unknown users & $1/\Delta\tau$  & $I|\mathcal{K}||\mathcal{Q}|$  &  $I|\mathcal{K}|$\\
%D-SA    & Estimation with unknown users & $f_{\textrm{\tiny NYQ}}$ & $W$ & $1$\\
%C-SA    & Detection & $f_{\textrm{\tiny NYQ}}$ & $P$ & $1$\\
%CMS     & Estimation/Detection with unknown users & $\frac{P}{N}f_{\textrm{\tiny NYQ}}$ & $P$ & $P$\\
%\bottomrule
%\bottomrule
%\end{tabular}
%\caption{Architectural parameters for different acquisition schemes}
%%\end{table}
%\end{figure*}
%%

Note that for different schemes, we need further processing to produce final decisions. This last step is different for different receivers. For example, the MF scheme has very simple post-processing at the cost of handling exhaustive MF samples, while the C-SA scheme can tune the number of measurements to handle less data by spending a higher premium for sparsity recovery algorithms. Therefore, we discuss this trade-off specifically in detail in Section \ref{complexity}, with the MF being a benchmark for our comparison.

\subsection{CMS Observation Model}\label{orig_link_acq}
Similar to \cite{fletcher2009off,applebaum2011asynchronous}, we follow the analog description of \eqref{orig_sig} but discretize the parameters as in the MF approach. The analog domain derivation is mostly inspired by the FRI literature, but it is not tied to estimating continuous parameters here in this paper. For notational convenience, we introduce the triple-index coefficient
\begin{align}\label{a_ikq}
	\alpha_{i,k,q} = \sum_{ j\in\mathcal{I}}\sum_{r=1}^{R} h_{j,r}\delta[i-j]\delta[k-k_{j,r}]\delta[q-q_{j,r}],
\end{align}
for $k\in\mathcal{K},q\in\mathcal{Q}$ as an indicator of whether the $i$th user is transmitting and whether there exists a link at a certain delay $\tau=q\Delta\tau$ with a certain carrier offset $\omega=k\Delta\omega$ in the window. Note that $\alpha_{i,k,q}=0$ except when $k=k_{i,r}$ and $q=q_{i,r}$ for $i\in\mathcal{I}$. Denoting each {\it MF template} by
\begin{align}\label{templates}
	\phi_{i,k,q}(t) \triangleq \phi_i(t-q\Delta\tau)e^{\mathrm{i}k\Delta\omega t},
\end{align}
the signal in \eqref{orig_sig} can be approximately expressed as
\begin{align}\label{sparse_sig}
	x(t) &=\sum_{i=1}^{I}\sum_{k\in\mathcal{K}} \sum_{q\in\mathcal{Q}}\alpha_{i,k,q}e^{\mathrm{i}k\Delta\omega \ell D}~\phi_{i,k,q}(t-\ell D)+ v(t).
\end{align}
Clearly, $x(t)$ has at most $|\mathcal{I}|R$ active components due to the sparsity of $\alpha_{i,k,q}$.
%We define the { {\it delay-Doppler pairs}} for the $i$th user in the $\ell$th shift as
%\begin{align}\label{A_i}
%	{ \mathcal{A}_i^{(\ell)}} &\triangleq \left\{(k,q) : |\alpha_{i,k,q}| \neq 0, k\in\mathcal{K},q\in\mathcal{Q}\right\},
%\end{align}
%for $i =1,\cdots,I$.
To facilitate notations in our derivations, we introduce the triplet index $(i,k,q)$ and define the length-$I|\mathcal{K}||\mathcal{Q}|$  {\it link vector} ${ { \bdsb{\alpha}[\ell]}}$ at the $\ell$th shift as
\begin{align}\label{index_mu}
	\Big[{ { \bdsb{\alpha}[\ell]}}\Big]_{(i,k,q)} &\triangleq \Big[{ { \bdsb{\alpha}[\ell]}}\Big]_{(i-1)|\mathcal{K}||\mathcal{Q}| + (k+K-1)|\mathcal{Q}|+q} \\
	&= \alpha_{i,k,q}
\end{align}
for any $i=1,\cdots,I$, $k\in\mathcal{K}$ and $q\in\mathcal{Q}$. We define the associated {\it delay-Doppler set} for the $i$th user at the $\ell$th shift
\begin{align}\label{A_i}
	{ \mathcal{A}_i^{(\ell)}} &\triangleq \left\{(k,q) : |\alpha_{i,k,q}| > 0, k\in\mathcal{K},q\in\mathcal{Q}\right\},%~ i =1,\cdots,I,
\end{align}
from which we extract the delays $\tau_{i,r}=q_{i,r}\Delta\tau$ and carrier offsets $\omega_{i,r}=k_{i,r}\Delta\omega$ for the active users $i\in\mathcal{I}$ if ${ \mathcal{A}_i^{(\ell)}}\neq \varnothing$.

In this work, we consider the use of sampling kernels  $\psi_p(t)$ that are linear combinations of all the MF templates \cite{eldar2009compressed}. The following theorem specifies the observed samples from the CMS architecture in Fig. \ref{CS} in relation to the link vector ${ { \bdsb{\alpha}[\ell]}}$.
\begin{thm}\label{observation_model}
Suppose that we choose sampling kernels $\{\psi_p(t)\}_{p=1}^{P}$ as linear combinations of the MF templates
	\begin{align}\label{sampling_kernels}
		\psi_p(t) = \sum_{i=1}^{I}\sum_{k\in\mathcal{K}}\sum_{q\in\mathcal{Q}} b_{p,(i,k,q)} \phi_{i,k,q}(t),~~ p=1,\cdots,P.
	\end{align}
The length-$P$ sample vector $\mathbf{c}[n]\triangleq [c_1[n],\cdots,c_P[n]]^T$ taken at shift $t=nD$, can then be expressed as
	\begin{align}\label{c_1}
		\mathbf{c}[n] = \mathbf{B}\mathbf{M}_{\phi\phi}[n-\ell] \bdsb{\Gamma}[\ell] { { \bdsb{\alpha}[\ell]}} +\bdsb{\nu}[n],
	\end{align}
where $\bdsb{\alpha}[\ell]$ is the link vector at the $\ell$th shift and\\
$1)$ $\mathbf{B}$ is a $P\times I|\mathcal{K}||\mathcal{Q}|$ matrix with $\left[\mathbf{B}\right]_{p, (i,k,q)} \triangleq b_{p,(i,k,q)}$;\\
$2)$ $\mathbf{M}_{\phi\phi}[n-\ell]$ is an $I|\mathcal{K}||\mathcal{Q}|\times I|\mathcal{K}||\mathcal{Q}|$ matrix with
		\begin{align*}
			& \Big[\mathbf{M}_{\phi\phi}[n-\ell]\Big]_{(i',k',q'),(i,k,q)} \triangleq R_{\phi_{i',k',q'}\phi_{i,k,q}}[(n-\ell)D],
		\end{align*}
		where
		\begin{align*}
			&R_{\phi_{i',k',q'}\phi_{i,k,q}}[\Delta t]\\
			&~~~~~~~~~~~~~~~ = e^{\mathrm{i}k\Delta\omega(\Delta t-q'\Delta\tau)} R_{\phi_{i'},\phi_i}^{(k-k')}\left[(q'-q)\Delta\tau + \Delta t\right],
		\end{align*}		
	       and $R_{\phi_{i'},\phi_i}^{(k-k')}(\cdot)$ is the ambiguity function
		\begin{align}\label{ambiguity}
			R_{\phi_{i'},\phi_i}^{(k-k')}(\cdot)
			&=\int \phi_{i'}^\ast(t)\phi_i(t-\cdot)e^{\mathrm{i}(k-k')\Delta\omega t}\mathrm{d}t.
		\end{align}
$3)$ $\bdsb{\Gamma}[\ell] \triangleq \mathbf{I}_I\otimes\mathbf{E}[\ell]\otimes \mathbf{I}_{|\mathcal{Q}|}$ with %a $I|\mathcal{K}||\mathcal{Q}|\times I|\mathcal{K}||\mathcal{Q}|$ matrix with
			\begin{align}
				\mathbf{E}[\ell] \triangleq \mathrm{diag}[e^{\mathrm{i}K\Delta\omega\ell D},\cdots,e^{-\mathrm{i}K\Delta\omega \ell D}];
			\end{align}
$4)$ $\bdsb{\nu}[n]\triangleq [\nu_1[n],\cdots,\nu_P[n]]^T$ is the filtered Gaussian noise vector with zero mean and covariance
			\begin{align}
				\mathbf{R}_{vv}=\sigma^2\mathbf{R}_{\psi\psi}~\textrm{with}~\mathbf{R}_{\psi\psi} = \mathbf{B}\mathbf{M}_{\phi\phi}[0]\mathbf{B}^H.
			\end{align}
\end{thm}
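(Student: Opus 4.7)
The plan is to directly substitute the assumed forms of the signal and the sampling kernels into the definition of the samples \eqref{gen_sampling}, then identify the resulting expression with the matrix factorization claimed in the statement. First, I would plug the discretized signal model \eqref{sparse_sig} into the inner product $c_p[n]=\langle x(t),\psi_p(t-nD)\rangle$, separating the signal and noise contributions, so that the signal part reads $\sum_{i,k,q}\alpha_{i,k,q}e^{\mathrm{i}k\Delta\omega\ell D}\langle \phi_{i,k,q}(t-\ell D),\psi_p(t-nD)\rangle$ and the noise part becomes $\nu_p[n]\triangleq \langle v(t),\psi_p(t-nD)\rangle$.

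Next, I would substitute the prescribed form \eqref{sampling_kernels} of $\psi_p(t)$ into the inner product. By linearity, the resulting quantity becomes $\sum_{i',k',q'}b_{p,(i',k',q')}^{\ast}\langle \phi_{i,k,q}(t-\ell D),\phi_{i',k',q'}(t-nD)\rangle$. A direct computation using the definition \eqref{templates} and a change of variables shows that this inner product equals $R_{\phi_{i',k',q'}\phi_{i,k,q}}[(n-\ell)D]$ as claimed, because the carrier $e^{\mathrm{i}k\Delta\omega t}$ together with the delay shift $-q\Delta\tau$ produces precisely the phase factor $e^{\mathrm{i}k\Delta\omega(\Delta t-q'\Delta\tau)}$ in front, while the remaining integral is the ambiguity function $R_{\phi_{i'},\phi_i}^{(k-k')}$ evaluated at $(q'-q)\Delta\tau+\Delta t$.

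Having identified the entries of $\mathbf{M}_{\phi\phi}[n-\ell]$, I would then read off the three matrix factors. The coefficients $b_{p,(i',k',q')}$ arrange into the $P\times I|\mathcal{K}||\mathcal{Q}|$ matrix $\mathbf{B}$ acting on the left; the cross-correlation values assemble into $\mathbf{M}_{\phi\phi}[n-\ell]$; and the scalars $e^{\mathrm{i}k\Delta\omega \ell D}$ appearing alongside $\alpha_{i,k,q}$ assemble into a diagonal matrix acting on $\bdsb{\alpha}[\ell]$. Using the ordering convention \eqref{index_mu}, i.e., indexing first by $i$, then by $k$, then by $q$, this diagonal matrix factors as the Kronecker product $\mathbf{I}_I\otimes \mathbf{E}[\ell]\otimes \mathbf{I}_{|\mathcal{Q}|}$, which is exactly $\bdsb{\Gamma}[\ell]$. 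Combining the three factors yields $\mathbf{c}[n]=\mathbf{B}\mathbf{M}_{\phi\phi}[n-\ell]\bdsb{\Gamma}[\ell]\bdsb{\alpha}[\ell]+\bdsb{\nu}[n]$.

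Finally, for the noise term I would use that $v(t)$ is zero-mean white Gaussian with intensity $\sigma^2$, so that $\bdsb{\nu}[n]$ is zero-mean Gaussian with $\mathbb{E}\{\nu_p[n]\nu_{p'}^{\ast}[n]\}=\sigma^2\langle \psi_{p'}(t-nD),\psi_p(t-nD)\rangle$; expanding both kernels via \eqref{sampling_kernels} and recognizing the resulting inner products as entries of $\mathbf{M}_{\phi\phi}[0]$ immediately gives the stated covariance $\mathbf{R}_{vv}=\sigma^2\mathbf{B}\mathbf{M}_{\phi\phi}[0]\mathbf{B}^H$. The main bookkeeping obstacle will be to verify the ordering conventions carefully so that the phase factors collapse into the tensor product form of $\bdsb{\Gamma}[\ell]$ and that the row/column indexing of $\mathbf{B}$ and $\mathbf{M}_{\phi\phi}$ is consistent throughout; the underlying computation is otherwise a routine application of linearity of the inner product.
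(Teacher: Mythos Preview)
Your proposal is correct and follows essentially the same route as the paper's proof: substitute \eqref{sparse_sig} into \eqref{gen_sampling}, expand $\psi_p$ via \eqref{sampling_kernels}, evaluate the template cross-correlations by a change of variable to expose the ambiguity function, and then read off the factorization $\mathbf{B}\mathbf{M}_{\phi\phi}[n-\ell]\bdsb{\Gamma}[\ell]$ together with the noise covariance $\sigma^2\mathbf{B}\mathbf{M}_{\phi\phi}[0]\mathbf{B}^H$. The only cosmetic difference is that the paper first packages the kernel--template inner products into an intermediate matrix $\mathbf{M}_{\psi\phi}[n-\ell]$ before showing $\mathbf{M}_{\psi\phi}[n-\ell]=\mathbf{B}\mathbf{M}_{\phi\phi}[n-\ell]$, and it writes $b_{p,(i',k',q')}$ rather than $b_{p,(i',k',q')}^\ast$ in that expansion; otherwise the arguments coincide.
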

\begin{proof}
See Appendix \ref{theorem_1}.
\end{proof}
The freedom in choosing $\mathbf{B}$ allows us to optimize acquisition performance. Before discussing the details of optimization in Section \ref{optimization_B}, we further simplify the model in Theorem \ref{observation_model}.

\subsection{CMS Sequential Acquisition Model}\label{reformulated_problem}
Theorem \ref{observation_model} describes the general model of the samples $\mathbf{c}[n]$ obtained in the $n$th shift with respect to the link vector ${ { \bdsb{\alpha}[\ell]}}$. However, the exact shift $\ell$ is unknown to the receiver. As mentioned earlier, determining the exact shift is not necessary to recover the link parameters, as long as the shift is properly aligned with the signal and produces a positive detection maximizing the likelihood ratio. In the following, we transform the observation model $\mathbf{c}[n]$ in Theorem \ref{observation_model} to an equivalent model. The equivalent model is stated with respect to a modified link vector ${ \bdsb{\alpha}[n]}$ { at the $n$th shift}, which contains entries that are shifted with the relative placement of $(n-\ell)$ in relation to ${ { \bdsb{\alpha}[\ell]}}$. The reason for this is that we can use a time-invariant system matrix instead of a time-variant one $\mathbf{M}_{\phi\phi}[n-\ell]$ for the purpose of sequential detection.
\begin{thm}\label{observation_model_reform}
	Let $D=N\Delta\tau$ for some integer $N\in\mathbb{Z}$. The outputs $\mathbf{c}[n]$ of the compressive samplers can be re-written as
	\begin{align}\label{thm_2}
		\mathbf{c}[n] =
		\mathbf{B}\mathbf{M}{ \bdsb{\Gamma}[n]\bdsb{\alpha}[n]} + \bdsb{\nu}[n],
	\end{align}
	 where $\mathbf{M}\triangleq\mathbf{M}_{\phi\phi}[0]$ and $\bdsb{\alpha}[n]$ is the $n$th shift link vector
	 \begin{align}
	 	\Big[{ \bdsb{\alpha}[n]}\Big]_{(i,k,q)}
		 \triangleq \alpha_{i,k,q+(n-\ell)N}. %e^{-\mathrm{i}k\Delta\omega nD}.
	\end{align}
\end{thm}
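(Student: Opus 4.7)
The plan is to start from the representation in Theorem \ref{observation_model} and reduce the claim to a single algebraic identity
\begin{align*}
\mathbf{M}_{\phi\phi}[n-\ell]\,\bdsb{\Gamma}[\ell]\,\bdsb{\alpha}[\ell]
=
\mathbf{M}\,\bdsb{\Gamma}[n]\,\bdsb{\alpha}[n],
\end{align*}
since the compressor $\mathbf{B}$ and the noise vector $\bdsb{\nu}[n]$ carry over unchanged from \eqref{c_1}. Everything therefore rides on showing that the time shift $(n-\ell)D$ in $\mathbf{M}_{\phi\phi}[\cdot]$ can be absorbed into a column re-indexing plus a diagonal phase.

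First I would write the entry of $\mathbf{M}_{\phi\phi}[n-\ell]$ at $\Delta t=(n-\ell)D=(n-\ell)N\Delta\tau$ using the formula supplied by Theorem \ref{observation_model}. Because the shift $D=N\Delta\tau$ lives on the same grid as the delay quantization, the argument $(q'-q)\Delta\tau+(n-\ell)D$ of the ambiguity function $R_{\phi_{i'}\phi_i}^{(k-k')}$ equals $(q'-(q-(n-\ell)N))\Delta\tau$. This means the ambiguity-function factor in entry $((i',k',q'),(i,k,q))$ of $\mathbf{M}_{\phi\phi}[n-\ell]$ matches exactly the ambiguity-function factor in entry $((i',k',q'),(i,k,q-(n-\ell)N))$ of $\mathbf{M}=\mathbf{M}_{\phi\phi}[0]$. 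The only leftover piece is the column-dependent phase $e^{\mathrm{i}k\Delta\omega(n-\ell)D}$ originating from the $\Delta t$-term in the prefactor of the ambiguity expansion, giving
\begin{align*}
[\mathbf{M}_{\phi\phi}[n-\ell]]_{(i',k',q'),(i,k,q)}
=
e^{\mathrm{i}k\Delta\omega(n-\ell)D}\,[\mathbf{M}]_{(i',k',q'),(i,k,q-(n-\ell)N)}.
\end{align*}

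Next I would right-multiply by the diagonal $\bdsb{\Gamma}[\ell]=\mathbf{I}_I\otimes\mathbf{E}[\ell]\otimes\mathbf{I}_{|\mathcal{Q}|}$, whose $(i,k,q)$-diagonal entry is $e^{\mathrm{i}k\Delta\omega\ell D}$. The two phases combine into $e^{\mathrm{i}k\Delta\omega nD}$, which is precisely the $(i,k,q)$-diagonal entry of $\bdsb{\Gamma}[n]$. Applying the resulting matrix to $\bdsb{\alpha}[\ell]$ and performing the change of summation variable $\tilde q=q-(n-\ell)N$ turns $\alpha_{i,k,q}=\alpha_{i,k,\tilde q+(n-\ell)N}$ into the $(i,k,\tilde q)$-entry of $\bdsb{\alpha}[n]$ by definition; entries of $\bdsb{\alpha}[\ell]$ whose shifted index exits $\mathcal{Q}$ correspond to $\alpha_{i,k,\cdot}=0$ and drop out of the sum. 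Substituting the identity back into \eqref{c_1} yields \eqref{thm_2}.

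The only genuinely technical step is the shift identity for $\mathbf{M}_{\phi\phi}$, and it depends critically on the grid-compatibility assumption $D=N\Delta\tau$: without an integer ratio between the shift $D$ and the delay resolution $\Delta\tau$, the time argument of the ambiguity function would not reduce to a pure column re-index and no time-invariant representation with $\mathbf{M}=\mathbf{M}_{\phi\phi}[0]$ would exist. The Kronecker bookkeeping for $\bdsb{\Gamma}[\ell]$ and the redefinition of the link vector are routine once that phase arithmetic is in hand.
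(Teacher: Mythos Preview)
Your argument is correct and follows essentially the same route as the paper's proof: the paper also isolates the key shift identity $R_{\phi_{i',k',q'}\phi_{i,k,q}}[(n-\ell)D]=e^{\mathrm{i}k\Delta\omega(n-\ell)D}R_{\phi_{i',k',q'}\phi_{i,k,q+(\ell-n)N}}[0]$ (your entrywise identity for $\mathbf{M}_{\phi\phi}$), combines the phase with $e^{\mathrm{i}k\Delta\omega\ell D}$ to produce $e^{\mathrm{i}k\Delta\omega nD}$, and then performs the same change of variable in the $q$-sum. The only cosmetic difference is that the paper carries out these steps at the scalar sample level $c_p[n]$, whereas you package them as the matrix identity $\mathbf{M}_{\phi\phi}[n-\ell]\bdsb{\Gamma}[\ell]\bdsb{\alpha}[\ell]=\mathbf{M}\bdsb{\Gamma}[n]\bdsb{\alpha}[n]$.
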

\begin{proof}
	See Appendix \ref{theorem_2}.
\end{proof}
{
\begin{cor}
Let the {\it delay-Doppler sets} at the $n$th shift be
\begin{align}\label{S_i}
	{ \mathcal{A}_{i}^{(n)}} &\triangleq \left\{(k,q) : \left|\Big[{ \bdsb{\alpha}[n]}\Big]_{(i,k,q)}\right| \neq 0, k\in\mathcal{K},q\in\mathcal{Q}\right\}
\end{align}
for $i=1,\cdots,I$. Then for any $(k,q)\in{ \mathcal{A}_i^{(\ell)}}$ at the $\ell$th shift \eqref{A_i}, we have $(k, q+(\ell-n)N)\in{ \mathcal{A}_{i}^{(n)}}$ at the $n$th shift.
\end{cor}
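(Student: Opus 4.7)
The plan is to unfold the definitions and rely on the index relation already established in Theorem \ref{observation_model_reform}. Recall that Theorem \ref{observation_model_reform} gives, for every triple $(i,k,q)$,
\begin{align*}
\Big[\bdsb{\alpha}[n]\Big]_{(i,k,q)} = \alpha_{i,k,q+(n-\ell)N}.
\end{align*}
In particular, specializing to $n=\ell$ yields $[\bdsb{\alpha}[\ell]]_{(i,k,q)} = \alpha_{i,k,q}$, so the set $\mathcal{A}_i^{(\ell)}$ defined in \eqref{A_i} is exactly the set of $(k,q)$ for which $\alpha_{i,k,q}\neq 0$.

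From here the verification is a one-line index substitution. First, I would pick an arbitrary $(k,q)\in \mathcal{A}_i^{(\ell)}$, so that $|\alpha_{i,k,q}|>0$. Then I would plug the candidate index $(k,\,q+(\ell-n)N)$ into the formula from Theorem \ref{observation_model_reform}, obtaining
\begin{align*}
\Big[\bdsb{\alpha}[n]\Big]_{(i,k,\,q+(\ell-n)N)}
 = \alpha_{i,k,\,q+(\ell-n)N + (n-\ell)N}
 = \alpha_{i,k,q},
\end{align*}
where the $(\ell-n)N$ and $(n-\ell)N$ terms cancel because $D=N\Delta\tau$ is an integer multiple of the delay grid. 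Since this coefficient is nonzero by assumption, the index $(k,\,q+(\ell-n)N)$ satisfies the membership condition \eqref{S_i} defining $\mathcal{A}_i^{(n)}$, and the claim follows.

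There is no genuine obstacle; the only thing I have to watch carefully is the sign convention in the shift (the paper uses $(n-\ell)$ inside the coefficient and $(\ell-n)$ inside the set membership), together with checking that $q+(\ell-n)N$ still lies in $\mathcal{Q}$ for those shifts $n$ that actually capture the signal component of interest. The latter is implicit in the setup, because $N_0$ is chosen so that all shifts $n\in\{N_\eta,\ldots,N_\eta+N_0\}$ align with the observation window, and for those $n$ the translated index $q+(\ell-n)N$ stays within the composite delay search space $\mathcal{Q}$; otherwise $\mathcal{A}_i^{(n)}$ would not contain that component in the first place.
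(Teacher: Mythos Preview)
Your argument is correct and is precisely the intended one: the paper states this corollary immediately after Theorem~\ref{observation_model_reform} without a separate proof, since it follows directly by substituting the shifted index $(k,\,q+(\ell-n)N)$ into the defining relation $[\bdsb{\alpha}[n]]_{(i,k,q)}=\alpha_{i,k,q+(n-\ell)N}$, exactly as you do. Your remark about the range constraint $q+(\ell-n)N\in\mathcal{Q}$ is also the right caveat to flag, and the paper handles it implicitly in the same way you describe.
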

}
Using the modified sets ${ \mathcal{A}_{i}^{(n)}}$, the number of delay-Doppler pairs included at the $n$th shift equals $\sum_{i=1}^{I}|{ \mathcal{A}_{i}^{(n)}}|$. It is obvious that $|{ \mathcal{A}_{i}^{(n)}}|\leq |{ \mathcal{A}_i^{(\ell)}}|$ for any $i$. At the $n$th shift, if a positive det{ection is declared and $|{ \mathcal{A}_{i}^{(n)}}|= |{ \mathcal{A}_i^{(\ell)}}|$ for all $i$, then the modified link vector $\bdsb{\alpha}[n]$ at the $n$th shift carries equivalent link information as the link vector $\bdsb{\alpha}[\ell]$ at the $\ell$th shift.
Therefore, we use the model in Theorem \ref{observation_model_reform} for our design and re-state the goal of {\it link acquisition} as
\begin{enumerate}
	\item locating the MLR shift $\ell_\star$;
	\item identifying the set of active users $\widehat{\mathcal{I}}$ indicated by the {\it delay-Doppler set} $\mathcal{A}_i^{(\ell_\star)}\neq \varnothing$;
	\item resolving the delay-Doppler pairs in the $\ell_\star$th window $\mathcal{A}_i^{(\ell_\star)}\subseteq \mathcal{K}\times\mathcal{Q}$ for $i\in\widehat{\mathcal{I}}$.
\end{enumerate}

For better representation and comparison of the individual support set ${ \mathcal{A}_{i}^{(n)}}$ in relation to the original support set ${ \mathcal{A}_i^{(\ell)}}$, we introduce the full {\it { user-delay-Doppler sets}} for the link vectors ${ \bdsb{\alpha}[n]}$ and ${ \bdsb{\alpha}[\ell]}$ respectively
\begin{align}
	{ \mathcal{A}_n} &\triangleq \left\{(i,k,q) : ~(k,q)\in{ \mathcal{A}_{i}^{(n)}}, i\in\mathcal{I}\right\},\\
	{ \mathcal{A}_\ell} &\triangleq \left\{(i,k,q) : ~(k,q)\in{ \mathcal{A}_i^{(\ell)}}, i\in\mathcal{I}\right\}.
\end{align}
In the following sections, { we express the link vector explicitly with respect to the full {\it user-delay-Doppler set} ${ \mathcal{A}_n}$ and combine the phase rotation matrix $\bdsb{\Gamma}[n]$ at the $n$th shift as
\begin{align}
	{ \bdsb{\beta}_{{ \mathcal{A}_n}}\triangleq \bdsb{\Gamma}[n]\bdsb{\alpha}[n]}.
\end{align}
We call ${ \bdsb{\beta}_{{ \mathcal{A}_n}}=\left[\cdots,\beta_{i,k,q},\cdots\right]^T}$ the {\it modified link vector} and note that it is also a $|\mathcal{I}|R$-sparse vector.}

%\begin{enumerate}
%	\item the subscript $(\cdot)_{{ \mathcal{S}_n}}$ carries information on the active user set, and delay-Doppler pairs at $t=nD$;
%	\item the notation $\bdsb{\alpha}$ indicates channel fading information on the support ${ \mathcal{S}_n}^{\ast}$.
%\end{enumerate}

%
\section{Compressive Sequential Link Acquisition with Sparsity Regularization}\label{SR_GLRT}
We now develop an SR-LRT detection algorithm that tackles the link acquisition problem exploiting the compressive observation model given in Theorem \ref{observation_model_reform}. Link acquisition attempts to discriminate the true pattern ${ { \mathcal{A}_n}}$ against all possible patterns ${ \mathcal{S}_n}\neq { { \mathcal{A}_n}}$ at every shift $t=nD$ as a hypothesis test
\begin{align}\label{thm_2_reform}
	\mathcal{H}_{{ \mathcal{S}_n}} : &~~~
        \mathbf{c}[n] = \mathbf{B}\mathbf{M}{ \bdsb{\beta}_{\mathcal{S}_n}} + \mathbf{v}[n]
\end{align}
over all possible ${ \mathcal{S}_n}$ at every shift $t=nD$. Note that the signal presence detection is incorporated in this test by choosing $\mathcal{S}_n=\varnothing$ to be the null hypothesis. Given a specific set $\mathcal{S}_n$ for each possible $\bdsb{\beta}_{\mathcal{S}_n}$, the amplitudes of $\bdsb{\beta}_{\mathcal{S}_n}$ and the noise variance $\sigma^2$ are unknown and treated as nuisance parameters. The link acquisition is thus to detect the full {\it user-delay-Doppler set} ${ \mathcal{S}_n}$ for all possible $\mathcal{H}_{{ \mathcal{S}_n}}$ with ${ \bdsb{\beta}_{\mathcal{S}_n}}$ and the noise level $\sigma^2$ being nuisance in each shift $n$. Following the GLRT rationale, the test consists in finding the set ${ \mathcal{S}_n}$ maximizing
%%%% SINGLE COLUMN VERSION %%%%%%%%%%%%%%%%%%
\begin{align}\label{P_H}
	&\mathbb{P}\left(\mathcal{H}_{{ \mathcal{S}_n}}|{ \bdsb{\beta}_{\mathcal{S}_n}},\sigma^2\right)\\
	&=\frac{1}{\pi^P\sigma^{2P}|\mathbf{R}_{\psi\psi}|}
	 \exp\left(-\frac{\left\|\mathbf{c}[n]-\mathbf{B}\mathbf{M}{ \bdsb{\beta}_{\mathcal{S}_n}}\right\|^2_{\mathbf{R}_{\psi\psi}^{-1}}}{\sigma^2}\right)
\end{align}
in the presence of unknown parameters ${ \bdsb{\beta}_{\mathcal{S}_n}}$ and $\sigma^2$.

Note that when $\mathbf{B}=\mathbf{I}$ is identity, the samples $\mathbf{c}[n]$ are equivalent to the outputs of the MF approach. This implies that the samples $\mathbf{c}[n]$ obtained in the CMS architecture, using only $P$ sampling kernels in \eqref{sampling_kernels}, are equivalent to a linearly compressed version of the exhaustive MF output, even though they are obtained directly from the A/D architecture instead of using an exhaustive MF filterbank followed by a linear compressor $\mathbf{B}$ as in \cite{wang2008subspace,haupt2007compressive}, which would be much more complex. On the other hand, the difference in post-processing between the MF and CMS architecture is that MF allows to simply pick the hypothesis corresponding to the largest magnitude in the output $\mathbf{c}[n]$ as the detection result, while a more sophisticated detection scheme is necessary if we use the compressive samples $\mathbf{c}[n]$.

%\begin{figure}
%\centering
%\includegraphics[width=4.5in]{}
%\caption{SR-LRT detection scheme for asynchronous multiuser signals}\label{GLRT_sparse}
%\end{figure}

\subsection{Sequential Estimation for Link Acquisition}

The GLRT involved in the link acquisition requires estimating ${ \bdsb{\beta}_{\mathcal{S}_n}}$ and $\sigma^2$ for every possible ${ \mathcal{S}_n}$ at every shift $t=nD$. %From \cite[chapter??Theorem??]{kayfundamentals}, it is well known that the estimation of ${ \bdsb{\beta}_{\mathcal{S}_n}}$ is independent of $\sigma^2$.
For every hypothesis ${ \mathcal{S}_n}\neq\varnothing$, the estimate of $\bdsb{\beta}_{\mathcal{S}_n}$ under colored Gaussian noise $\bdsb{\nu}[n]$ with covariance $\mathbf{R}_{vv}=\sigma^2\mathbf{R}_{\psi\psi}$ is then obtained as
\begin{align}\label{fitting}
	{ \bdsb{\widehat{\beta}}_{\mathcal{S}_n}} \triangleq \arg &~\underset{{ \bdsb{\beta}_{\mathcal{S}_n}}}{\min}~ \left\|\mathbf{c}[n]-\mathbf{B}\mathbf{M}{ \bdsb{\beta}_{\mathcal{S}_n}}\right\|^2_{\mathbf{R}_{\psi\psi}^{-1}}.
\end{align}
The ``hat" notation $\widehat{(\cdot)}$ on the vector ${ \bdsb{\beta}_{\mathcal{S}_n}}$ refers to the estimates of the amplitudes on the support ${ \mathcal{S}_n}$. The total number of such estimates scales with the number of hypothesis which in this case is $2^{I|\mathcal{K}||\mathcal{Q}|}$, resulting in an NP-hard combinatorial estimation problem. Instead, we obtain the estimates using a sparse approach similar to \cite{fletcher2009off,applebaum2011asynchronous} for the GLRT. Specifically, we solve the combinatorial problem in a ``soft" fashion at every shift $t=nD$ similar to \cite{applebaum2011asynchronous}
\begin{align}\label{sparse_est}
	\bdsb{\widehat{\beta}}\triangleq \arg &~\underset{\bdsb{\beta}}{\min}~ \left\|\mathbf{c}[n]-\mathbf{B}\mathbf{M}\bdsb{\beta}\right\|^2_{\mathbf{R}_{\psi\psi}^{-1}} + \lambda\cdot f(\bdsb{\beta}),
\end{align}
where $\lambda$ is some regularization parameter and $f(\bdsb{\beta}) = \left\|\bdsb{\beta}\right\|_0$ or $f(\bdsb{\beta})=\left\|\bdsb{\beta}\right\|_1$ are the sparsity regularization constraint. If the $\|\cdot\|_0$ constraint is imposed, the problem is approximately solved via greedy methods such as orthogonal matching pursuit (OMP) \cite{pati1993orthogonal}. When $\|\cdot\|_1$ norm is used, this problem can be solved via convex programs \cite{candes2006stable}. Generally speaking, as discussed in \cite{applebaum2011asynchronous} and \cite{candes2006stable}, the required number of samples $P$ for sparse recovery in the noiseless case scales logarithmically with the length-$I|\mathcal{K}||\mathcal{Q}|$ link vector $P \propto |\mathcal{I}|R \log I|\mathcal{K}||\mathcal{Q}|$, which increases if discretizations are made finer.

From the solution of \eqref{sparse_est}, we extract the full {\it user-delay-Doppler set} ${ \widehat{\mathcal{A}}_n}$ from the soft estimate $\bdsb{\widehat{\beta}}$, which in turns gives the estimated user set $\widehat{\mathcal{I}}$ and the estimated {\it delay-Doppler set} $\widehat{\mathcal{A}}_i^{(n)}$ for each user $i\in\widehat{\mathcal{I}}$
\begin{align}
    \mathcal{E}\left(\bdsb{\widehat{\beta}}\right) = \widehat{\mathcal{A}}_n \quad
    \Longrightarrow
    \quad
    \widehat{\mathcal{I}}~\textrm{and}~\left\{\widehat{\mathcal{A}}_i^{(n)}\right\}_{i\in\widehat{\mathcal{I}}}.
\end{align}
Here $\mathcal{E}(\cdot)$ is the extraction mapping from the soft estimate to the estimated {\it user-delay-Doppler} set. The extraction method is explained in Section \ref{UDD_extraction}.

With the estimated set of active users $\widehat{\mathcal{I}}$ and individual {\it delay-Doppler set} ${ \widehat{\mathcal{A}}_i^{(n)}}$, we have the truncated estimate of the link vector ${ \bdsb{\widehat{\beta}}_{\widehat{\mathcal{A}}_n}}$ and the estimated noise variance %\cite[chapter??Theorem??]{kayfundamentals}
\begin{align}\label{sigma}
	{ \widehat{\sigma}_{\widehat{\mathcal{A}}_n}^2} &= \Big\|\mathbf{c}[n]-\mathbf{B}\mathbf{M}{ \bdsb{\widehat{\beta}}_{ \widehat{\mathcal{A}}_n}}\Big\|_{\mathbf{R}_{\psi\psi}^{-1}}^2/P.
\end{align}
{ Since the formulation in \eqref{sparse_est} is no longer maximum likelihood due to the sparsity regularization, we call it the Sparsity-Regularized Likelihood Ratio Test (SR-LRT).}

\subsection{{ User-Delay-Doppler} Set Extraction $\mathcal{E}\left(\bdsb{\widehat{\beta}}\right)$}\label{UDD_extraction}
Given the soft estimate ${ \bdsb{\widehat{\beta}}}$ in \eqref{sparse_est} at every shift $t=nD$, the estimated user-delay-Doppler set ${ \widehat{\mathcal{A}}_n}$ is extracted depending on the application scenarios below.

\subsubsection{\sl Unknown, random number of active users $\mathcal{I}$}\label{unknown_user_case}
In random access communications the receiver has no knowledge of who is active, nor any expectation on the number of components it is likely to detect.
Using this soft estimate ${ \bdsb{\widehat{\beta}}}$, we identify the active users as
\begin{align}\label{G_i}
	\widehat{\mathcal{I}} &\triangleq \left\{ i : \underset{k,q}{\max}~\left|\widehat{\beta}_{i,k,q}\right|^2\geq \rho_i, k\in\mathcal{K},q\in\mathcal{Q}\right\}
\end{align}
where $\rho_i$ is a chosen threshold for that specific user to be considered present, usually set as a fraction of the magnitude of the amplitudes in $\widehat{\bdsb{\beta}}$. Then for each detected active user $i\in\widehat{\mathcal{I}}$, we take $R$ strongest paths in $\widehat{\beta}_{i,k,q}$ with respect to $k\in\mathcal{K}$ and $\mathcal{Q}$ to be the active set ${ \widehat{\mathcal{A}}_i^{(n)}}$ for each user $i\in\widehat{\mathcal{I}}$.

\subsubsection{\sl Partial knowledge on active users $\mathcal{I}$}\label{partial_user_case}
This scenario corresponds to environments where all users are active, however only a certain subset is likely to be detectable by the receiver. GPS receivers are an example. Specifically, there are a total of $I=24$ quasi-stationary GPS satellites moving around the earth and the active satellites in the field-of-view of a specific GPS receiver are unknown. However, the GPS receiver is informed that at any point in space there should be  $|\mathcal{I}|=4$ strongest signals from satellites, and it attempts to find such signals, along with their delay-Doppler parameters for triangularization. In this case a positive detection corresponds to having at least four components detected and we can interpret this case as fixing $|\mathcal{I}|$ for the receiver detection.  In general, we identify the users $i\in\widehat{\mathcal{I}}$ as those with the $|\mathcal{I}|$ strongest amplitudes $|\widehat{\beta}_{i,k,q}|$ with respect to $i=1,\cdots,I$ in ${ \bdsb{\widehat{\beta}}}$. Then we take $R$ strongest paths in $|\widehat{\beta}_{i,k,q}|$ with respect to $k$ and $q$ to be the active set ${ \widehat{\mathcal{A}}_i^{(n)}}$ for each user $i\in\widehat{\mathcal{I}}$.

\subsubsection{\sl Known active users $\mathcal{I}$} This scenario includes multi-antenna and cooperative transmission systems, where the receiver is aware of the active sources, i.e., $\mathcal{I}$ is known. This case is trivial because we do not need to identify the active users. The active set ${ \widehat{\mathcal{A}}_i^{(n)}}$ for each user $i\in \mathcal{I}$ is formed by picking the $R$ strongest components in $|\widehat{\beta}_{i,k,q}|$ with respect to $k\in\mathcal{K}$ and $\mathcal{Q}$.

In fact, there are many applications that require multiuser acquisition. In a cellular environment the mobile may try to detect the presence of base-stations, and typically there is a handful of them whose synchronization signals are not completely overwhelmed by noise. For example in LTE systems there are X distinct synchronization sequences that are used to signal the presence of a tower to mobile users. In the example we used the setting $I_{\max}=4$ and $R=2$ as an easy example resembling the application of GPS, which captures $4$ satellites for triangularization that travels mainly with a line-of-sight component. In the revised manuscript Section V, we describe the GPS example in more detail. In cellular systems, the sets of active users can reach much larger numbers (hundreds), but the link acquisition is typically done to acquire base-station signals in the downlink, with a handful of dominating signals from the nearest base-stations. In the uplink, more often, dedicated random access control channels are used to detect the presence of a subscriber in the cell, and the collision rate in them is usually small. Hence, the example of  $I_{\max}=4$ is not so far off, as far as link acquisition is concerned.

\subsection{Sequential Detection for Link Acquisition}
Substituting  ${ \bdsb{\widehat{\beta}}}_{{ \widehat{\mathcal{A}}_n}}$ and ${ \widehat{\sigma}_{ \widehat{\mathcal{A}}_n}}^2$ back to \eqref{P_H}, the generalized likelihood ratio can be computed as
%%%% single column %%%%%%%%%%
\begin{align}\label{eta_n}
	{ \eta_{\textrm{\tiny C-SA}}(n)} &\triangleq \frac{\mathbb{P}\left(\mathcal{H}_{{ \widehat{\mathcal{A}}_n}}|{ \bdsb{\widehat{\beta}}}_{{ \widehat{\mathcal{A}}_n}},\widehat{\sigma}_{{ \widehat{\mathcal{A}}_n}}^2\right)}{\mathbb{P}\left(\mathcal{H}_{\varnothing}|\widehat{\sigma}_{\varnothing}^2\right)}\\
	 &=\frac{\|\mathbf{c}[n]\|^{2P}_{\mathbf{R}_{\psi\psi}^{-1}}}{\left\|\mathbf{c}[n]-\mathbf{B}\mathbf{M}{ \bdsb{\widehat{\beta}}}_{{ \widehat{\mathcal{A}}_n}}\right\|^{2P}_{\mathbf{R}_{\psi\psi}^{-1}}}> \eta_0,
\end{align}
%\begin{align}\label{eta_n}
%	{ \eta_{\textrm{\tiny C-SA}}(n)} &\triangleq \frac{\mathbb{P}\left(\mathcal{H}_{\widehat{\mathcal{S}}_{n-\ell}}|{ \bdsb{\widehat{\beta}}}_{\widehat{\mathcal{S}}_{n-\ell}},\widehat{\sigma}_{\widehat{\mathcal{S}}_{n-\ell}}^2\right)}{\mathbb{P}\left(\mathcal{H}_{\varnothing}|\widehat{\sigma}_{\varnothing}^2\right)}\\
%	 &=\frac{\|\mathbf{c}[n]\|^{2P}}{\left\|\mathbf{c}[n]-\mathbf{B}\mathbf{M}{ \bdsb{\widehat{\beta}}}_{\widehat{\mathcal{S}}_{n-\ell}}\right\|^{2P}_{\mathbf{R}_{\psi\psi}^{-1}}}> \eta_0,
%\end{align}
which indicates the presence of the signal if ${ \eta_{\textrm{\tiny C-SA}}(n)} > \eta_0$ so that the receiver knows that certain signal components are captured in the observation. Denote the first window that passes the above test as 
\begin{align}
N_\eta\triangleq \min\left\{\underset{n}{\arg}~{ \eta_{\textrm{\tiny C-SA}}(n)}\geq \eta_0\right\}. 
\end{align}
As mentioned in \eqref{GLRT_best_window}, the MLR window is located as the window that maximizes the likelihood ratio
\begin{align}
	\ell_\star = \arg\underset{n}{\max}~~{ \eta_{\textrm{\tiny C-SA}}(n)},\quad n=N_\eta,\cdots,N_\eta+N_0.
\end{align}
Accordingly, from the link vector ${ \bdsb{\widehat{\beta}}_{\widehat{\mathcal{A}}_{\ell_\star}}}$ in the $\ell_\star$th window, we can extract the delay-Doppler pairs
\begin{align}\label{asyn:eq8}
	\widehat{\tau}_{i,r} = q_{i,r}\Delta\tau, ~
	\widehat{\omega}_{i,r} = k_{i,r}\Delta\omega, ~ (k,q)\in\widehat{\mathcal{A}}_i^{(\ell_\star)},~i\in\widehat{\mathcal{I}}.
\end{align}

\section{Optimization of Compressive Samplers}\label{optimization_B}
The link acquisition performance depends on the ability of the SR-LRT to differentiate between different hypotheses $\mathcal{H}_{\mathcal{S}_n}$. In this section, we seek a criterion to optimize the sampling kernels $\{\psi_p(t)\}_{p=1}^{P}$ by designing the matrix $\mathbf{B}$. The metric we maximize is the weighted average of the Kullback-Leibler (KL) distances between any $\mathcal{H}_{\mathcal{S}_n}$ in \eqref{thm_2_reform}. Since every possible pattern for $\mathcal{S}_n$ is independent of $n$, here we omit the subscript for convenience. In choosing the KL distance we are motivated by the Chernoff-Stein's lemma \cite{cover1991elements}, whose statement indicates that the probability of confusing $\mathcal{H}_{\mathcal{S}}$ and $\mathcal{H}_{\mathcal{S}'}$ decreases exponentially with the pair-wise KL distance between them. As we point out later in this section, if the noise is Gaussian and the weights are chosen appropriately, then the weighted average KL distance of all the {\it pair-wise KL} distances has the same expression as the Chernoff information under the Bayesian detection framework, implying that the average KL distance is an effective measure in evaluating detection performance. Being consistent with our system model and detection formulation, we proceed with our analysis using the average KL distance with some pre-defined weights.

Introducing $\mathcal{G}\left(\mathbf{B}\right)\triangleq\mathbf{M}^H\mathbf{B}^H\left(\mathbf{B}\mathbf{M}\mathbf{B}^H\right)^{-1}\mathbf{B}\mathbf{M}$, the {\it pair-wise KL} distance between any $\mathcal{H}_{\mathcal{S}}$ and $\mathcal{H}_{\mathcal{S}'}$ is given by \cite{bajovic2009robust}
%%% single column %%%%%
%\begin{align}\label{pair-wise_KL}
%	 &\mathbb{D}\left(P_{\mathcal{H}_{\mathcal{S}}},P_{\mathcal{H}_{\mathcal{S}'}}|\bdsb{\beta}_{\mathcal{S}},\bdsb{\beta}_{\mathcal{S}'},\sigma^2\right)
%	 =  \frac{1}{\sigma^2} \left(\bdsb{\beta}_{\mathcal{S}}-\bdsb{\beta}_{\mathcal{S}'}\right)^H\mathbf{M}^H\mathbf{B}^H\left(\mathbf{B}\mathbf{M}\mathbf{B}^H\right)^{-1}\mathbf{B}\mathbf{M}\left(\bdsb{\beta}_{\mathcal{S}}-\bdsb{\beta}_{\mathcal{S}'}\right).
%\end{align}
\begin{align}\label{pair-wise_KL}
	 &\mathbb{D}\left(\mathcal{H}_{\mathcal{S}}\|\mathcal{H}_{\mathcal{S}'}\right)
	 =  \frac{\left(\bdsb{\beta}_{\mathcal{S}}-\bdsb{\beta}_{\mathcal{S}'}\right)^H\mathcal{G}\left(\mathbf{B}\right)\left(\bdsb{\beta}_{\mathcal{S}}-\bdsb{\beta}_{\mathcal{S}'}\right)}{\sigma^2}.
\end{align}
If the pair-wise KL distance is zero, then the two hypotheses $\mathcal{H}_{\mathcal{S}}$ and $\mathcal{H}_{\mathcal{S}'}$ are indistinguishable for that particular pairs of $\mathcal{S}$ and $\mathcal{S}'$. A non-zero pair-wise KL distance between arbitrary pair of $\mathcal{S}$ and $\mathcal{S}'$ with $|\mathcal{S}|, |\mathcal{S}'|\leq s$ requires $\mathrm{spark}\left[\mathcal{G}\left(\mathbf{B}\right)\right]\geq 2s$, where $\mathrm{spark}[\cdot]$ is the kruskal rank of a matrix. We note that the average KL distance metric defined here does not automatically ensure that $\mathbb{D}\left(\mathcal{H}_{\mathcal{S}}\|\mathcal{H}_{\mathcal{S}'}\right) >0$ for all $\mathcal{S}\neq\mathcal{S}'$.
% depending on $\mathbf{M}$ and $P$ there may be no solution or many solutions that maximize the average KL distance, while ensuring $\mathbb{D}\left(\mathcal{H}_{\mathcal{S}}\|\mathcal{H}_{\mathcal{S}'}\right)>0$. Second, when there are multiple solutions $\mathbf{B}$ that yield identical average KL distance $\overline{\mathbb{D}}$, we select among the feasible choices $\mathbf{B}$ those that give the least mass at $\mathbb{D}\left(\mathcal{H}_{\mathcal{S}}\|\mathcal{H}_{\mathcal{S}'}\right) =0$ when varying $|\mathcal{S}|,|\mathcal{S}'|\leq s$.

 To define the average KL distance, we associate each distinct pair of supports $\mathcal{S}$ and $\mathcal{S}'$ with the weight $\gamma_{\mathcal{S},\mathcal{S}'}$. Furthermore, we associate the nuisance amplitudes in $\bdsb{\beta}_{\mathcal{S}}$ a multidimensional continuous weighting function  $P(\bdsb{\beta}_{\mathcal{S}})$ for any $\mathcal{S}$. Under these assumptions, the weighted average of all {\it pair-wise KL} distances is defined as
\begin{align}\label{avg_KL}
	\overline{\mathbb{D}}
	&=
	\sum_{\mathcal{S}}\sum_{\mathcal{S}'\neq \mathcal{S}}
	\gamma_{\mathcal{S},\mathcal{S}'}\iint P(\bdsb{\beta}_{\mathcal{S}})P(\bdsb{\beta}_{\mathcal{S}'})
	 \mathbb{D}\left(\mathcal{H}_{\mathcal{S}}\|\mathcal{H}_{\mathcal{S}'}\right)\mathrm{d}\bdsb{\beta}_{\mathcal{S}}\mathrm{d}\bdsb{\beta}_{\mathcal{S}'}.
\end{align}

%With a set of properly chosen weights, the design of randomization matrix $\mathbf{B}$ is formulated below.
\begin{prop}
	Given a set of normalized weights $\gamma_{\mathcal{S},\mathcal{S}'}$ for every distinct pair $\mathcal{S}\neq\mathcal{S}'$, and a continuous weighting function $P(\bdsb{\beta}_{\mathcal{S}})=\prod_{(i,k,q)\in\mathcal{S}}P(\beta_{i,k,q})$ over the nuisance amplitudes with $\int \bdsb{\beta}_{\mathcal{S}}P(\bdsb{\beta}_{\mathcal{S}})\mathrm{d}\bdsb{\beta}_{\mathcal{S}}= \mathbf{0}$ and $\int |\beta_{i,k,q}|^2 P(\beta_{i,k,q})\mathrm{d}\beta_{i,k,q}= \sigma_\beta^2$, the average KL distance $\overline{\mathbb{D}}$ is equal to
	\begin{align}\label{avg_D}
		 \overline{\mathbb{D}}=\frac{\sigma_\beta^2}{\sigma^2}\mathrm{Tr}\left[\mathbf{M}^H\mathbf{B}^H\left(\mathbf{B}\mathbf{M}\mathbf{B}^H\right)^{-1}\mathbf{B}\mathbf{M}\right].
	\end{align}
\end{prop}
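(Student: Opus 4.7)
My plan is to expand the quadratic form in \eqref{pair-wise_KL}, evaluate the double integral using the two moment conditions on $P(\bdsb{\beta}_{\mathcal{S}})$, and then collapse the weighted sum over supports into a single trace by invoking the normalization of the $\gamma_{\mathcal{S},\mathcal{S}'}$. Throughout, I interpret each $\bdsb{\beta}_{\mathcal{S}}$ as a vector of length $I|\mathcal{K}||\mathcal{Q}|$ that is zero outside the support $\mathcal{S}$, so that subtraction of vectors associated with different supports is well defined.

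The first step is algebraic. Using the shorthand $\mathcal{G}=\mathcal{G}(\mathbf{B})$, I expand
\begin{align*}
(\bdsb{\beta}_{\mathcal{S}}-\bdsb{\beta}_{\mathcal{S}'})^H\mathcal{G}(\bdsb{\beta}_{\mathcal{S}}-\bdsb{\beta}_{\mathcal{S}'})
=\bdsb{\beta}_{\mathcal{S}}^H\mathcal{G}\bdsb{\beta}_{\mathcal{S}}+\bdsb{\beta}_{\mathcal{S}'}^H\mathcal{G}\bdsb{\beta}_{\mathcal{S}'}
-\bdsb{\beta}_{\mathcal{S}}^H\mathcal{G}\bdsb{\beta}_{\mathcal{S}'}-\bdsb{\beta}_{\mathcal{S}'}^H\mathcal{G}\bdsb{\beta}_{\mathcal{S}},
\end{align*}
and integrate term by term against $P(\bdsb{\beta}_{\mathcal{S}})P(\bdsb{\beta}_{\mathcal{S}'})$. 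Since $P$ is a product density in the amplitudes and $\int \bdsb{\beta}_{\mathcal{S}}P(\bdsb{\beta}_{\mathcal{S}})\mathrm{d}\bdsb{\beta}_{\mathcal{S}}=\mathbf{0}$, the two cross terms vanish after iterated integration, leaving only the two quadratic self-terms.

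For the surviving terms I use the trace trick $\bdsb{\beta}_{\mathcal{S}}^H\mathcal{G}\bdsb{\beta}_{\mathcal{S}}=\mathrm{Tr}[\mathcal{G}\,\bdsb{\beta}_{\mathcal{S}}\bdsb{\beta}_{\mathcal{S}}^H]$ and pull the expectation inside the trace. Because the amplitudes on the support are zero-mean, uncorrelated with common variance $\sigma_\beta^2$, and identically zero off the support, I obtain
\begin{align*}
\int \bdsb{\beta}_{\mathcal{S}}\bdsb{\beta}_{\mathcal{S}}^H P(\bdsb{\beta}_{\mathcal{S}})\mathrm{d}\bdsb{\beta}_{\mathcal{S}}
=\sigma_\beta^2\,\mathbf{\Pi}_{\mathcal{S}},
\end{align*}
where $\mathbf{\Pi}_{\mathcal{S}}$ is the diagonal indicator matrix of $\mathcal{S}$. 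Hence
\begin{align*}
\iint P(\bdsb{\beta}_{\mathcal{S}})P(\bdsb{\beta}_{\mathcal{S}'})\,\mathbb{D}\left(\mathcal{H}_{\mathcal{S}}\|\mathcal{H}_{\mathcal{S}'}\right)\mathrm{d}\bdsb{\beta}_{\mathcal{S}}\mathrm{d}\bdsb{\beta}_{\mathcal{S}'}
=\frac{\sigma_\beta^2}{\sigma^2}\,\mathrm{Tr}\!\left[\mathcal{G}\left(\mathbf{\Pi}_{\mathcal{S}}+\mathbf{\Pi}_{\mathcal{S}'}\right)\right].
\end{align*}

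It remains to assemble the weighted sum. Substituting back into \eqref{avg_KL} gives
\begin{align*}
\overline{\mathbb{D}}
=\frac{\sigma_\beta^2}{\sigma^2}\,\mathrm{Tr}\!\left[\mathcal{G}\sum_{\mathcal{S}}\sum_{\mathcal{S}'\neq\mathcal{S}}\gamma_{\mathcal{S},\mathcal{S}'}\left(\mathbf{\Pi}_{\mathcal{S}}+\mathbf{\Pi}_{\mathcal{S}'}\right)\right].
\end{align*}
The normalization convention on $\gamma_{\mathcal{S},\mathcal{S}'}$ is precisely the one that makes the inner weighted sum of support projectors equal to the identity matrix, i.e.\ it assigns unit total weight to every coordinate $(i,k,q)$. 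With this, the inner sum collapses to $\mathbf{I}$ and the result $\overline{\mathbb{D}}=\frac{\sigma_\beta^2}{\sigma^2}\mathrm{Tr}[\mathcal{G}(\mathbf{B})]$ follows, which is \eqref{avg_D}. The only delicate point in the argument is bookkeeping the support embedding so that the zero-mean cancellation of cross terms and the reduction of the second-moment matrix to $\sigma_\beta^2\mathbf{\Pi}_{\mathcal{S}}$ are both rigorous; everything else reduces to standard linearity of the trace and of the expectation.
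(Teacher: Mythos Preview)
Your proof is correct and follows essentially the same route as the paper's: apply the trace identity to the quadratic form, use the zero-mean assumption to kill the cross terms, use the common second moment $\sigma_\beta^2$ to reduce the covariance of $\bdsb{\beta}_{\mathcal{S}}-\bdsb{\beta}_{\mathcal{S}'}$ to a diagonal matrix supported on $\mathcal{S}\cup\mathcal{S}'$, and then invoke the symmetry/normalization of the weights $\gamma_{\mathcal{S},\mathcal{S}'}$ to collapse the weighted sum of support indicators to $\sigma_\beta^2\mathbf{I}$. The paper packages the first two steps as computing $\mathbf{R}_{\mathcal{S},\mathcal{S}'}=\iint P(\bdsb{\beta}_{\mathcal{S}})P(\bdsb{\beta}_{\mathcal{S}'})(\bdsb{\beta}_{\mathcal{S}}-\bdsb{\beta}_{\mathcal{S}'})(\bdsb{\beta}_{\mathcal{S}}-\bdsb{\beta}_{\mathcal{S}'})^H\,\mathrm{d}\bdsb{\beta}_{\mathcal{S}}\mathrm{d}\bdsb{\beta}_{\mathcal{S}'}$ directly, whereas you expand the quadratic form first and introduce the projector $\mathbf{\Pi}_{\mathcal{S}}$, but the content is the same.
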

\begin{proof}
See Appendix \ref{KL_max}.
\end{proof}
{ The way we choose the weights $\gamma_{\mathcal{S},\mathcal{S}'}$ and weighting functions $P(\bdsb{\beta}_{\mathcal{S}})$ is equivalent to assuming a uniform distribution on the users, delays and Dopplers together with i.i.d. Gaussian priors on the amplitudes in $\bdsb{\beta}_{\mathcal{S}}$ in a Bayesian detection framework. According to \cite{bajovic2009robust}, the average KL distance obtained in \eqref{avg_D} has the same expression as the Chernoff information, which determines the Bayesian detection error exponent. Therefore, the average KL distance measure in a sense maximizes the error exponent in the exponential decay on the Bayesian detection error performance (or miss detection performance under the Neyman-Pearson detection framework).}

We note that it is possible that specific choices of the number of samplers $P$ and the dictionary $\{\phi_{i,k,q}(t)\}_{i=1,\cdots,I}^{k\in\mathcal{K},q\in\mathcal{Q}}$ (i.e., the Gram matrix $\mathbf{M}$) lead to indistinguishable sparsity patterns \cite{candes2006stable} such that $\mathrm{spark}\left(\mathcal{G}\left(\mathbf{B}\right)\right)\leq|\mathcal{S}|+|\mathcal{S}'|$. In other words, the design of $\mathbf{B}$ cannot cure intrinsic problems caused by the choice of $P$ or the Gram matrix $\mathbf{M}$, which are given in the optimization. The intrinsic problems caused by the Gram matrix $\mathbf{M}$ in communications is typically handled by optimizing the transmit sequences $\phi_i(t)$ irrespective of the receiver such that $\mathbf{M}$ becomes diagonally dominated. This implies a well localized ambiguity function for each of the $\phi_i(t)$ and low cross-correlation between $\phi_i(t)$'s with different delays and Doppler. Gold sequences used in GPS and M-sequences used in spread spectrum communications, for example, are known to have good properties in this regard. This is a well investigated problem \cite{klauder1960design} that we do not aim to cover here.

Given $P$ and $\mathbf{M}$, we propose an optimal $\mathbf{B}$ that maximizes the average KL distance $\overline{\mathbb{D}}$ if there is a unique solution to the optimization; when there are multiple solutions that yield identical average KL distance $\overline{\mathbb{D}}$, we further choose in the feasible set the matrix $\mathbf{B}$ that gives the least occurrence of events $\mathbb{D}\left(\mathcal{H}_{\mathcal{S}}\|\mathcal{H}_{\mathcal{S}'}\right) =0$. We use the results in the following lemma for our optimization.

\begin{lem}\label{ratio_trace}
{\bf (Ratio Trace Maximization \cite{duda1995pattern})} Given a pair of $L\times L$ positive semi-definite matrices $(\mathbf{S},\mathbf{G})$ and an $L\times P$ full column rank matrix $\mathbf{W}$, the ratio trace problem is
\begin{align}
	\mathbf{W}^{\rm opt} = \arg\underset{\mathbf{W}}{\max}~\mathrm{Tr}\left[\left(\mathbf{W}^H\mathbf{S}\mathbf{W}\right)^{-1}\mathbf{W}^H\mathbf{G}\mathbf{W}\right].
\end{align}
The optimal $\mathbf{W}^{\rm opt} = [\mathbf{w}_1^{\rm opt},\cdots,\mathbf{w}_P^{\rm opt}]$ is given by the generalized eigenvectors $\{\mathbf{w}_p^{\rm opt}\}_{p=1}^P$ corresponding to $P$ largest generalized eigenvalues of the pair $(\mathbf{S},\mathbf{G})$ with $P\leq \mathrm{rank}[\mathbf{S}]$.
\end{lem}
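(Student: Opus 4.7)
The plan is to reduce the ratio trace problem to a standard Ky Fan trace maximization on orthonormal columns by a whitening change of variable driven by $\mathbf{S}$. First, I would assume that $\mathbf{S}$ is positive definite (the extension to the case when $\mathbf{S}$ is merely positive semi-definite of rank at least $P$ is handled by restricting the feasible columns of $\mathbf{W}$ to the range of $\mathbf{S}$, which is exactly the content of the condition $P\leq \mathrm{rank}[\mathbf{S}]$ in the statement). Under positive-definiteness, $\mathbf{S}^{1/2}$ exists and is invertible, so I would introduce $\mathbf{V}\triangleq \mathbf{S}^{1/2}\mathbf{W}$ and rewrite the objective as
\begin{equation*}
  \mathrm{Tr}\!\left[\left(\mathbf{V}^H\mathbf{V}\right)^{-1}\mathbf{V}^H\widetilde{\mathbf{G}}\mathbf{V}\right], \qquad \widetilde{\mathbf{G}}\triangleq \mathbf{S}^{-1/2}\mathbf{G}\mathbf{S}^{-1/2},
\end{equation*}
which is manifestly invariant under any right multiplication $\mathbf{V}\mapsto \mathbf{V}\mathbf{Q}$ by an invertible $P\times P$ matrix $\mathbf{Q}$.

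The next step would be to exploit this invariance to normalize the columns: every full-column-rank $\mathbf{V}$ can be factored via a thin QR (or polar) decomposition as $\mathbf{V}=\mathbf{U}\mathbf{R}$ with $\mathbf{U}^H\mathbf{U}=\mathbf{I}_P$, which means it is without loss of generality to restrict the optimization to $\mathbf{V}^H\mathbf{V}=\mathbf{I}_P$. The problem then reduces to the constrained maximization
\begin{equation*}
  \max_{\mathbf{V}^H\mathbf{V}=\mathbf{I}_P}~ \mathrm{Tr}\!\left[\mathbf{V}^H\widetilde{\mathbf{G}}\mathbf{V}\right],
\end{equation*}
which is the textbook Ky Fan / Poincar\'{e} trace-maximization problem. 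Its solution is well known: the optimal $\mathbf{V}^{\rm opt}$ has as columns the orthonormal eigenvectors of $\widetilde{\mathbf{G}}$ associated with its $P$ largest eigenvalues, and the maximum value equals the sum of those eigenvalues.

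Finally I would undo the whitening. Setting $\mathbf{W}^{\rm opt}=\mathbf{S}^{-1/2}\mathbf{V}^{\rm opt}$, a column $\mathbf{v}_p^{\rm opt}$ satisfying $\widetilde{\mathbf{G}}\mathbf{v}_p^{\rm opt}=\lambda_p\mathbf{v}_p^{\rm opt}$ corresponds, after multiplying both sides by $\mathbf{S}^{-1/2}$ and substituting $\mathbf{w}_p^{\rm opt}=\mathbf{S}^{-1/2}\mathbf{v}_p^{\rm opt}$, to the generalized eigenvalue equation $\mathbf{G}\mathbf{w}_p^{\rm opt}=\lambda_p\mathbf{S}\mathbf{w}_p^{\rm opt}$. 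Thus the columns of $\mathbf{W}^{\rm opt}$ are precisely the generalized eigenvectors of the pencil $(\mathbf{G},\mathbf{S})$ associated with the $P$ largest generalized eigenvalues, which is the claim.

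The only genuinely delicate point is dealing with the possibly singular $\mathbf{S}$: the argument above sits inside the $\mathrm{range}(\mathbf{S})$, and the condition $P\leq \mathrm{rank}[\mathbf{S}]$ guarantees that there is enough room to place $P$ orthonormal directions so that $\mathbf{W}^H\mathbf{S}\mathbf{W}$ remains invertible; this justifies the use of the (pseudo-) square root and keeps the change of variable well defined. Everything else is routine algebra that I would not grind through in the final write-up beyond citing the Ky Fan maximum principle.
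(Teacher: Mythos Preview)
Your argument is correct and is the standard derivation: whiten with $\mathbf{S}^{1/2}$, use the invariance under right multiplication to reduce to orthonormal columns, apply the Ky Fan trace-maximization principle, and then undo the whitening to obtain the generalized eigenvector characterization. The handling of the singular $\mathbf{S}$ via restriction to $\mathrm{range}(\mathbf{S})$ under the hypothesis $P\leq\mathrm{rank}[\mathbf{S}]$ is also the right way to make the square-root well defined.

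For comparison with the paper: there is nothing to compare, because the paper does not prove this lemma at all. It is stated as a quoted result from \cite{duda1995pattern} and is then invoked as a black box in the proof of Theorem~\ref{KL_opt}. So you have supplied a complete proof where the paper simply cites the literature.

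One small point worth being explicit about in your write-up is the ordering convention for the pencil. You derive $\mathbf{G}\mathbf{w}_p^{\rm opt}=\lambda_p\mathbf{S}\mathbf{w}_p^{\rm opt}$ and take the $P$ largest $\lambda_p$, which is correct for the objective as written. The paper, in its application of the lemma (Appendix~\ref{theorem_3}), writes the defining equation as $\mathbf{S}\mathbf{b}_p=\lambda_p\mathbf{G}\mathbf{b}_p$, which flips the eigenvalues to their reciprocals; this is a notational inconsistency in the paper rather than a defect in your argument, but it is worth stating your convention clearly so a reader does not get confused when matching your lemma to its downstream use.
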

The optimal $\mathbf{B}$ is identified in the following theorem\footnote{Note that the computation of the weights is done offline, and does not add complexity to the online processing.}:

\begin{thm}\label{KL_opt}
	Let the eigendecomposition be $\mathbf{M}=\mathbf{U}\bdsb{\Sigma}\mathbf{U}^H$ where $\bdsb{\Sigma}=\mathrm{diag}[\sigma_1,\cdots,\sigma_{I|\mathcal{K}||\mathcal{Q}|}]$ is the eigenvalue matrix in descending order and $\mathbf{U}$ is the eigenvector matrix of $\mathbf{M}$. Denote the set of $P\leq \mathrm{rank}[\mathbf{M}]$ principal eigenvectors
	\begin{align}
		\mathcal{U}&=\Big\{\mathbf{U}_P = [\mathbf{u}_1,\cdots,\mathbf{u}_P] :\\
        &~~~~~~~\mathbf{M}=\mathbf{U}\bdsb{\Sigma}\mathbf{U}^H, \mathbf{U}= \left[\mathbf{u}_1,\cdots,\mathbf{u}_{I|\mathcal{K}||\mathcal{Q}|}\right]\Big\}.
	\end{align}
	Let $\bdsb{\Xi}_P$ be an arbitrary non-singular $P\times P$ matrix. When $\mathbf{U}_P$ is unique, the matrix $\mathbf{B}_\star = \bdsb{\Xi}_P\mathbf{U}_P^H$ is chosen to maximize the average KL distance $\overline{\mathbb{D}}$ in \eqref{avg_KL}. When $\mathbf{U}_P$ is not unique, we choose $\mathbf{\widehat{U}}_P = \max_{\mathbf{U}_P\in\mathcal{U}}~\mathrm{spark}\left(\mathbf{U}_P^H\right)$
to maximize the average KL distance and minimize the occurrence of events $\mathbb{D}\left(\mathcal{H}_{\mathcal{S}}\|\mathcal{H}_{\mathcal{S}'}\right) = 0$ for $\mathcal{S} \neq \mathcal{S}'$.
\end{thm}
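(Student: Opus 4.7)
The plan is to reduce the optimization to the ratio trace problem of Lemma \ref{ratio_trace}, extract the generalized eigenstructure explicitly using the fact that $\mathbf{M}$ is Hermitian, and then address the non-uniqueness using the secondary spark criterion.

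First I would rewrite the objective in \eqref{avg_D} in a form amenable to Lemma \ref{ratio_trace}. Since $\mathbf{M}$ is the Gram matrix of the templates $\phi_{i,k,q}(t)$, it is Hermitian positive semi-definite, so $\mathbf{M}^H = \mathbf{M}$. Applying the cyclic property of the trace,
\begin{align*}
\overline{\mathbb{D}} = \frac{\sigma_\beta^2}{\sigma^2}\mathrm{Tr}\left[\left(\mathbf{B}\mathbf{M}\mathbf{B}^H\right)^{-1}\mathbf{B}\mathbf{M}^2\mathbf{B}^H\right].
\end{align*}
With the identification $\mathbf{W}=\mathbf{B}^H$, $\mathbf{S}=\mathbf{M}$, and $\mathbf{G}=\mathbf{M}^2$, this matches exactly the objective of Lemma \ref{ratio_trace}, so maximizing $\overline{\mathbb{D}}$ over $\mathbf{B}$ with $P\leq\mathrm{rank}[\mathbf{M}]$ is equivalent to the ratio trace problem.

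Next I would compute the generalized eigenpairs of $(\mathbf{S},\mathbf{G})=(\mathbf{M},\mathbf{M}^2)$. The defining relation $\mathbf{M}^2\mathbf{w}=\lambda\,\mathbf{M}\mathbf{w}$, restricted to the range of $\mathbf{M}$, collapses to $\mathbf{M}\mathbf{w}=\lambda\mathbf{w}$. Hence the generalized eigenvectors coincide with the ordinary eigenvectors of $\mathbf{M}$ and the generalized eigenvalues are just its eigenvalues $\sigma_1\geq\cdots\geq\sigma_{I|\mathcal{K}||\mathcal{Q}|}$. Lemma \ref{ratio_trace} then singles out the $P$ principal eigenvectors $\mathbf{U}_P=[\mathbf{u}_1,\ldots,\mathbf{u}_P]$ as the optimal choice of $\mathbf{W}$, i.e. $\mathbf{B}^H=\mathbf{U}_P$ up to right multiplication by an arbitrary nonsingular matrix. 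Verifying the invariance of $\overline{\mathbb{D}}$ under the map $\mathbf{B}\mapsto\bdsb{\Xi}_P\mathbf{B}$ for any nonsingular $\bdsb{\Xi}_P$ is a direct computation: the factors $\bdsb{\Xi}_P$ and $\bdsb{\Xi}_P^H$ cancel in $(\bdsb{\Xi}_P\mathbf{B}\mathbf{M}\mathbf{B}^H\bdsb{\Xi}_P^H)^{-1}$ after sandwiching with $\bdsb{\Xi}_P\mathbf{B}\mathbf{M}$ and $\mathbf{M}\mathbf{B}^H\bdsb{\Xi}_P^H$. This justifies the parametric form $\mathbf{B}_\star=\bdsb{\Xi}_P\mathbf{U}_P^H$ in the first half of the theorem.

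Finally I would address the non-uniqueness. When $\sigma_P=\sigma_{P+1}$, the $P$-dimensional principal invariant subspace admits infinitely many orthonormal bases, so the set $\mathcal{U}$ is not a singleton; however, every element of $\mathcal{U}$ yields the same value of $\overline{\mathbb{D}}$, because the trace in \eqref{avg_D} depends only on the range of $\mathbf{B}^H$ (again by the invariance argument). This leaves freedom to invoke a secondary criterion without sacrificing optimality. Since $\mathcal{G}(\mathbf{B})=\mathbf{M}^H\mathbf{B}^H(\mathbf{B}\mathbf{M}\mathbf{B}^H)^{-1}\mathbf{B}\mathbf{M}$ has the same row/column support structure as $\mathbf{B}^H$ on the relevant subspace, maximizing $\mathrm{spark}(\mathbf{U}_P^H)$ over $\mathbf{U}_P\in\mathcal{U}$ minimizes the number of sparsity patterns $(\mathcal{S},\mathcal{S}')$ for which $\mathbb{D}(\mathcal{H}_\mathcal{S}\|\mathcal{H}_{\mathcal{S}'})=0$, yielding the choice $\widehat{\mathbf{U}}_P$ in the theorem.

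The main obstacle I anticipate is Step 2 when $\mathbf{M}$ is rank-deficient: I would need to be careful that the reduction $\mathbf{M}^2\mathbf{w}=\lambda\mathbf{M}\mathbf{w}\Rightarrow\mathbf{M}\mathbf{w}=\lambda\mathbf{w}$ is only valid on the range of $\mathbf{M}$, which is why the constraint $P\leq\mathrm{rank}[\mathbf{M}]$ appears in the statement; any $\mathbf{B}$ whose rows strayed outside $\mathrm{range}(\mathbf{M})$ would contribute zero to $\overline{\mathbb{D}}$, so restricting to the principal eigenspace incurs no loss. The remaining subtlety is making the ``least occurrence of $\mathbb{D}=0$ events'' argument precise, which I would do by noting that $\mathbb{D}(\mathcal{H}_\mathcal{S}\|\mathcal{H}_{\mathcal{S}'})=0$ iff $\bdsb{\beta}_\mathcal{S}-\bdsb{\beta}_{\mathcal{S}'}\in\mathrm{null}(\mathbf{B}\mathbf{M})$, and then relating the null space of $\mathbf{B}\mathbf{M}$ to $\mathrm{spark}(\mathbf{B}^H)$ via the standard sparse recovery characterization.
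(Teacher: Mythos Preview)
Your proposal is correct and follows essentially the same route as the paper: cast \eqref{avg_D} as the ratio trace problem of Lemma~\ref{ratio_trace} with $\mathbf{W}=\mathbf{B}^H$, $\mathbf{S}=\mathbf{M}$, $\mathbf{G}=\mathbf{M}\mathbf{M}^H=\mathbf{M}^2$, reduce the generalized eigenproblem to the ordinary eigenproblem for $\mathbf{M}$, verify the $\bdsb{\Xi}_P$-invariance, and then invoke the spark criterion as a tie-breaker. One small remark: your justification that ``every element of $\mathcal{U}$ yields the same $\overline{\mathbb{D}}$ because the trace depends only on the range of $\mathbf{B}^H$'' is not quite the right reason when $\sigma_P=\sigma_{P+1}$, since in that case different $\mathbf{U}_P\in\mathcal{U}$ can span \emph{different} $P$-dimensional subspaces; the value is nonetheless unchanged because the optimal objective equals $\tfrac{\sigma_\beta^2}{\sigma^2}\sum_{p=1}^P\sigma_p$ and the tied eigenvalues contribute identically regardless of which eigenvector is selected.
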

\begin{proof}
	See Appendix \ref{theorem_3}.
\end{proof}
The theorem above suggests that given $\mathbf{B}_\star$, the sampling kernels in the CMS architecture are obtained as
\begin{align}\label{sampler_optimal}
	\psi_p(t) = \sum_{i=1}^{I}\sum_{k\in\mathcal{K}}\sum_{q\in\mathcal{Q}} b_{p,(i,k,q)}^\star \phi_{i,k,q}(t),~~p=1,\cdots,P.
\end{align}
Note that, as long as the preamble sequences do not change, the optimal matrix $\mathbf{B}_\star$ and the corresponding sampling kernels $\psi_p(t)$ are pre-computed only once and their design does not contribute to the running cost of the receiver operations. If the projections on the sampling kernels are implemented in the digital domain instead of being analog filters, then the samples of $\psi_p(t)$ are placed in the static memory  that contains the receiver signal processing algorithms.

If the principal eigenvectors $\mathbf{U}_P$ are unique, then the choice of $\mathbf{B}_\star$ in general spreads out the pairwise KL distances, but $\mathbb{D}\left(\mathcal{H}_{\mathcal{S}}\|\mathcal{H}_{\mathcal{S}'}\right) = 0$ is possible for some choice of $\mathcal{S}$ and $\mathcal{S}'$. If $\mathrm{spark}(\mathbf{M})\geq 2|\mathcal{I}|R$ and we choose $P=\mathrm{rank}[\mathbf{M}]$, then it can be ensured that $\overline{\mathbb{D}}$ is maximized and $\mathbb{D}\left(\mathcal{H}_{\mathcal{S}}\|\mathcal{H}_{\mathcal{S}'}\right) > 0$ is guaranteed for $\mathcal{S} \neq \mathcal{S}'$ with $|\mathcal{S}|,|\mathcal{S}'|\leq |\mathcal{I}|R$. On the other hand, an extreme example where the eigenvectors are not unique is when $\{\phi_{i,k,q}(t)\}_{i=1,\cdots,I}^{k\in\mathcal{K},q\in\mathcal{Q}}$ form an orthogonal basis such that $\mathbf{M}=\mathbf{I}$. In this case, Theorem 3 is analogous to the fundamental criterion in compressed sensing that aims to find a matrix with $\mathrm{spark}(\mathbf{U}_P^H)\geq 2|\mathcal{I}|R$ that guarantees the recovery of any $|\mathcal{I}|R$-sparse vectors.

{ {\it Remark:} The number $|\mathcal{U}|$ of possible eigen-decompositions of the given matrix $\mathbf{M}$ may be quite large. For the extreme case when $\mathbf{M} = \mathbf{I}$, an arbitrary unitary matrix will be a possible choice. Fortunately, it is well known in compressive sensing that partial unitary matrices (such as the partial DFT matrix) have good compressive sensing properties (mutual coherence), thus this would not entail much loss if the matrix does not have exactly the maximum spark. On the other hand, as long as the number $|\mathcal{U}|$ is small, a finite search is also possible. More importantly, this task only needs to be done once and off-line.}
%
%
%%%
%%%%%%%%%%%%%%%    NUMERICAL RESULTS
%
%
\section{Numerical Results}\label{numerical_results}
In this section, we compare the C-SA acquisition scheme using the CMS architecture we propose against the alternative schemes. The first alternative is the D-SA scheme discussed in Section \ref{sampling_acquisition_SA}, which processes the uncompressed Nyquist-rate samples $\mathbf{c}_{\textrm{\tiny DS}}[n]$ using sparse recovery methods. {The comparison with the D-SA scheme can also be viewed as a comparison with the prior art in \cite{daniele2010sparsity,applebaum2011asynchronous}, which propose the SA scheme for CDMA users using direct sampling}\footnote{There are some subtle differences in the model. For example, the online sequential detection and unknown frequency offsets were not considered in \cite{daniele2010sparsity,applebaum2011asynchronous}. Hence, strictly speaking D-SA is a generalized version of \cite{daniele2010sparsity,applebaum2011asynchronous}, where the same basic idea is expanded to handle a larger set of hypotheses.}. Another alternative is the MF receiver, which also processes uncompressed samples. Rather than exploiting the underlying sparsity of the signal, it uses a filterbank matching the signal with all possible templates considered as hypotheses, trying to identify the link parameters through the best (highest) match.

To benchmark our C-SA against the D-SA and MF schemes we simulate the link acquisition of a single receiver plugged in a network populated by $I=10$ users, out of which $|\mathcal I|=4$ are randomly chosen to be actively transmitting. The user signature codes $a_i[m]$'s in \eqref{preamble} belong to a set of M-sequences \cite{buracas2002efficient}, which are quasi-orthogonal BPSK sequences of length $M=255$ with unit power ($|a_i[m]|^2=1$). Due to the user dislocation, mobility and possible scattering, each path of this asynchronous multi-user channel is characterized by the triplet $\{h_{i,r},t_{i,r},\omega_{i,r}\}_{i\in\mathcal{I},r\in\{1,\ldots,R\}}$ with $R=2$, where $\{h_{i,r}\}$ are Rayleigh distributed, $h_{i,r} \thicksim \mathcal{CN}\big(0,1/|\mathcal{I}|R\big)$, and uncorrelated normalized fading coefficients $\mathbb{E}\{h_{i,r}h_{i',r'}^{\ast}\} = \delta[i-i']\delta[r-r']\cdot 1/|\mathcal{I}|R$. The random delays $\{t_{i,r}\}$ are the sum of: (i) a time of arrival $t_0= \min_{i,r} t_{i,r}$ that is uniformly distributed over an interval that spans the duration of the preamble $t_0 \in \mathcal{U}(0,MT)$, and of (ii) multipath delays that are uniformly distributed within an interval $(t_0,t_0+\tilde{\tau}_{\max})$ where $\tilde{\tau}_{\max}$ is the multipath channel delay spread. Consequently, all the arrival times are within a window of duration $t_0+MT+\tilde{\tau}_{\max}$. The random frequency offsets $\{\omega_{i,r}\}$ are uniformly distributed, $\omega_{i,r} \in \mathcal{U}(-\omega_{\max},\omega_{\max})$, over a range delimited at each side by the maximum Doppler spread $\omega_{\max}$. As we simulate underspread channel conditions, we choose $\omega_{\max}$ such that $\omega_{\max}\tilde{\tau}_{\max} \ll 2\pi$. Thus, for a multipath delay spread of $\tilde{\tau}_{\max}=4T$, the choice of $\omega_{\max}=2.5 \cdot 10^{-3}\times 2\pi / T$ is comparable to a $25$ kHz offset for $1$ MHz signals.}

%\subsection{MF and CMS Implementations}

Specifically, we compare the C-SA with the D-SA and the MF at the same resolution,
which for frequency is $\Delta\omega=\omega_{\max}/5=0.5\times 2\pi/T$ and $\Delta \tau=T/2$ and thus $\mathcal{K}=\{-5,\cdots,5\}$. Given a multipath delay spread of $\tilde{\tau}_{\max}=4T$ and a shift of size $D=10T$, the delay space $\mathcal{Q}$ accounts for a composite delay spread of $\tau_{\max}\geq D + \tilde{\tau}_{\max}=14T$ and therefore $\mathcal{Q}=\{0,\cdots,27\}$. This parameter discretization leads to a multi-user time-frequency grid of $I|\mathcal{K}||\mathcal{Q}|=2640$ elements for the C-SA and D-SA scheme.

%%
%\begin{figure}[t]
%\centering
%\begin{center}
%\includegraphics[scale=.55]{roc1}
%\caption{{ ROC curves comparison for the order aware receiver implemented with either, C-SA front-end acquisition with $P=80$ channels (left), the D-SA mode processing $M$ uncompressed observations (middle), and the MF (right); tested at SNR=$\{-12,-10,-8,-6\}$ dB }}\label{fig:roc1}
%\end{center}
%\end{figure}
%%

\begin{figure}[t]
\centering
\begin{center}
\includegraphics[scale=0.5]{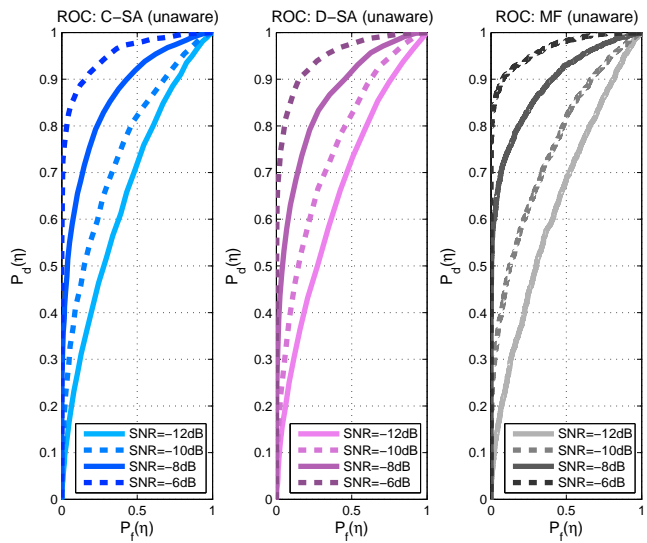}
\caption{{ Comparison of ROC curves for the order unaware receiver using the C-SA scheme with $P=80$ channels (left), the D-SA scheme processing $M$ uncompressed Nyquist observations (middle), and the MF scheme (right); tested at SNR=$\{-12,-10,-8,-6\}$ dB }}\label{fig:roc2}
\end{center}
\vspace{-0.6cm}
\end{figure}

We test the C-SA using three different numbers of sampling channels $P=\{60,80,100\}$. The D-SA scheme uses Nyquist-rate samples per shift, which corresponds to using the whole spreading code duration of $255$ samples, 
%where each shift of observations are obtained by generating $N=D/T_s$ new Nyquist samples for each shift $t=nD$ with $D=10T$, updating an internal buffer of size $W\leq \big((M + 2L_g)T + \tau_{\max}\big)/T_s$, where $L_gT=3$ is the duration of the side lobes of $g(t)$ in \eqref{preamble}. 
while the MF scheme uses a $I|\mathcal{K}|=110$-channel filterbank and performs $I|\mathcal{K}||\mathcal{Q}|=2640$ projections per shift with $|\mathcal{Q}|$ samples on each branch. In the simulations, the C-SA compressive samples are generated from the same Nyquist-rate samples used for the other receivers, by projecting them onto the digitized version of the sampling kernels $\psi_p(mT_s)$, $p=1,\ldots, P$, $m=0,\ldots, M-1$.  The C-SA simulation recovers the link parameters by solving \eqref{sparse_est} with the OMP algorithm \cite{pati1993orthogonal}, which is a popular choice to approximate the solution of a sparse problem \cite{tropp2007signal}. To motivate our selection of OMP, we refer to Section \ref{complexity} for an empirical evaluation of the OMP against two well-known $\ell_1$ minimizers, SpaRSA \cite{wright2009sparse} and $\ell_1$-Homotopy \cite{salman2010dynamic,malioutov2005homotopy}.

\subsection{Signal Detection Performance}
The first test on the detection performance of the C-SA receiver against the MF and the D-SA receiver we show, is for the cases of completely unknown active user sets, as discussed in Section \ref{unknown_user_case}. In Fig. \ref{fig:roc2}, all receivers are unaware of the random set $\mathcal{I}$ of active users.  Specifically, receivers consider as active components those that are found to have a signal strength that is at least 30\% of the strongest components they estimate, i.e. in \eqref{G_i} $ \rho_i=\max_{k,q} |\widehat{\beta}_{i,k,q}|^2 \, /3$. If no possible component meets this requirement, the channel is declared idle. Events for the winning hypothesis $\mathcal{H}_{\widehat{\mathcal{A}}_{\ell^\star}}$ are generated according to \eqref{asyn:eq8}. To first compare the sensitivity of the different receivers to active components, we define a signal hypothesis $\mathcal{H}_{1} $ corresponding to the all the non-idle channel hypotheses, i.e. $\widehat{\mathcal{A}}_{\ell^\star}\neq  \varnothing$. Then, the detection sensitivity is measured in terms of the receiver operating characteristic (ROC) curve, tracing the probability of detection $P_d(\eta_0) =P\big( \eta_{\textrm{\tiny C-SA}}(\ell_\star) \geq \eta_0 | \mathcal{H}_{1} \big)$, against the probability of false alarm $P_f(\eta_0)=P\big( \eta_{\textrm{\tiny C-SA}}(\ell_\star) \geq \eta_0 | \mathcal{H}_{\varnothing}\big)$ when the channel is actually idle. Note that a positive detection may correspond to an incorrect identification of the specific users that are active. Thus, Section \ref{sec.user-detection} shows the rate of correct detection of active components for the same simulation scenario.

As it can be observed, although the C-SA receiver exploits less than $P/M=80/255 \approx 1/3$ of the Nyquist-rate samples, the results from Fig. \ref{fig:roc2} show a modest degradation of the ROC compared to the MF receiver (less than $0.1$ measured at $P_f(\eta_0)=0.1$ and $\mathrm{SNR}=-8$ dB). { As expected, since the D-SA can leverage the additional observations to enhance its sensitivity, a growing gap $P_d(\eta_{\textrm{\tiny D-SA}})-P_d(\eta_{\textrm{\tiny C-SA}})$ is observable as the SNR increases (measured at $P_f(\eta_0)=0.1$) between the $\mathrm{SNR}=-6$ dB and the $\mathrm{SNR}=-12$ dB curves.}

%In the second comparison in Fig. \ref{fig:roc2}, all receivers are searching for a fixed number of active components $ |\mathcal{I}|$, but do not know which ones out of the many more possible. This is the case, for example of GPS receivers that try to detect 4 satellite signals. In this case the hypothesis $\mathcal{H}_{1} $ corresponds to having detected $|\mathcal{I}|$ components (e.g. four satellites), while the null hypothesis is in this case corresponds to have detected less than $ |\mathcal{I}|$
%
%
% In this case, we notice a performance degradation
%
%the event $\{ |\widehat{\mathcal{I}}| > |\mathcal{I}| \}$ of recovering a larger number of users will enhance $\eta(n)$ in \eqref{eta_n} due to more degrees of freedom to fit the observation $\mathbf{c}[n]$. When $\rho_i$ is chosen properly such that the occurrence of the event $\{ |\widehat{\mathcal{I}}| > |\mathcal{I}| \}$ is balanced by the occurrence of $\{ |\widehat{\mathcal{I}}| < |\mathcal{I}| \}$, the ROC curves remain almost invariant when compared to the order aware case, as in Fig. \ref{fig:roc1}. { More specifically, the curves for the MF are matching since both modes rely on the same decision statistics $\underset{i,k}\max \big|c_{i,k}[nN+q]\big|$.}

\begin{figure*}[!t]
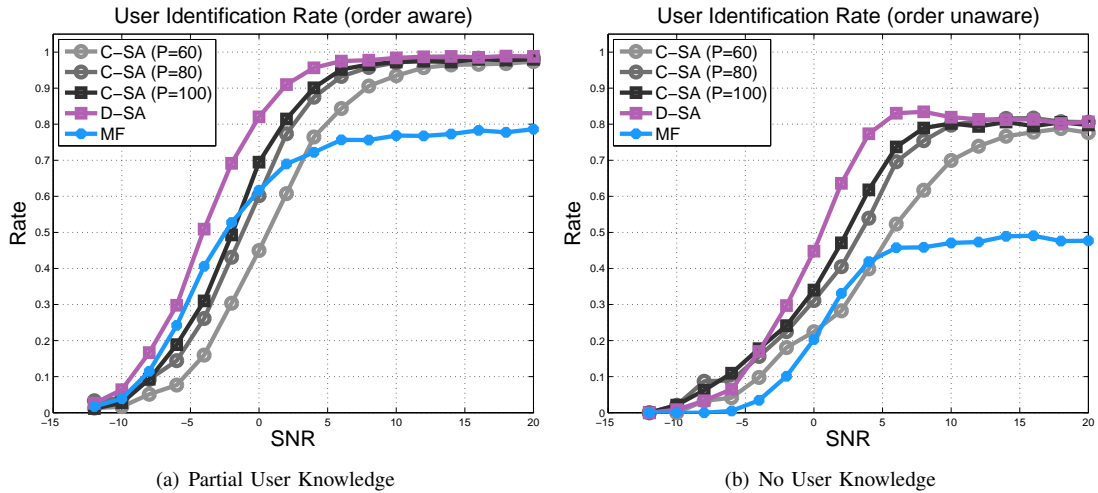

\begin{center}
 %\psfrag{gamma (dB)}[Bc][Bc]{\Large $\gamma$ (dB)}
 %\psfrag{Feasibility rate (\%)}[Bc][Bc]{\Large Feasibility rate (\%)}
{\subfigure[][Partial User Knowledge]{\resizebox{0.4\textwidth}{!}{\includegraphics{uid1}\label{fig:uid1}}}}
 \vspace{0.2cm}
{\subfigure[][No User Knowledge]{\resizebox{0.4\textwidth}{!}{\includegraphics{uid2}\label{fig:uid2}}}}
%\vspace{0.2cm}
\end{center}\vspace{-0.4cm}
\caption{(a) Successful user identification rate $P(\widehat{\mathcal{I}} = \mathcal{I})$ for the order aware receiver implementing the MF (blue) receiver, { the D-SA receiver (red)}, the CMS with C-SA (grey shades) receiver with $P=\{60,80,100\}$. (b) Successful user identification rate $P(\widehat{\mathcal{I}} = \mathcal{I})$ for the order unaware receiver implementing the MF (blue) receiver, { the D-SA receiver (red)}, or the CMS (grey shades) receiver with $P=\{60,80,100\}$}\vspace{-0.5cm}
\end{figure*}

\begin{figure*}[!t]
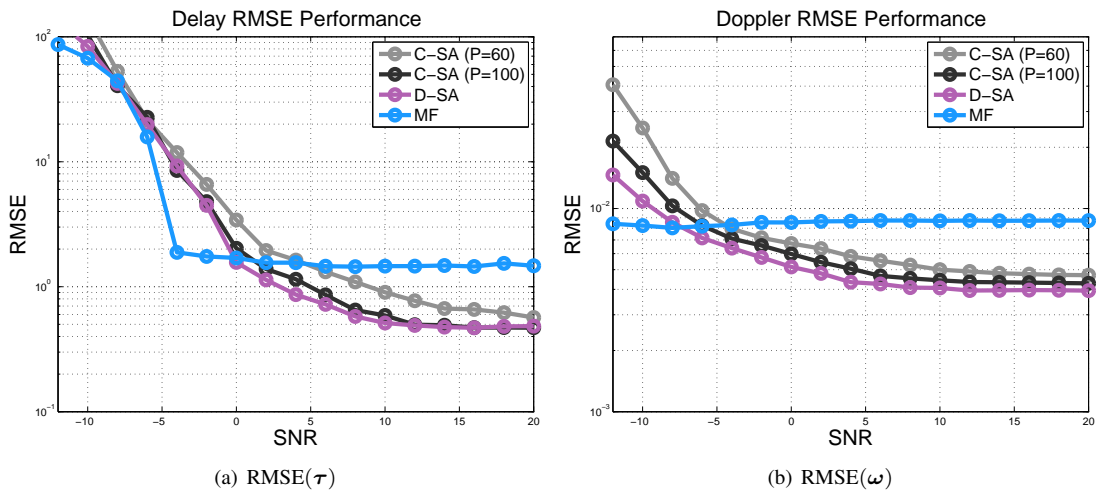

\begin{center}
 %\psfrag{gamma (dB)}[Bc][Bc]{\Large $\gamma$ (dB)}
 %\psfrag{Feasibility rate (\%)}[Bc][Bc]{\Large Feasibility rate (\%)}
{\subfigure[][$\text{RMSE}(\bdsb{\tau})$]{\resizebox{0.4\textwidth}{!}{\includegraphics{uid3}\label{fig:uid3}}}}
 \vspace{0.2cm}
{\subfigure[][$\text{RMSE}(\bdsb{\omega})$]{\resizebox{0.4\textwidth}{!}{\includegraphics{uid4}\label{fig:uid4}}}}
%\vspace{+0.2cm}
\end{center}\vspace{-0.4cm}
\caption{(a) $\text{RMSE}(\bdsb{\tau})$ as a function of $\text{SNR}=\{-12,-8,\ldots,20\}$ implemented by the C-SA scheme (grey shades) with $P=\{60,100\}$, { by the D-SA (red)} or by the MF (blue). (b) $\text{RMSE}(\bdsb{\omega})$ as a function of $\text{SNR}=\{-12,-8,\ldots,20\}$ implemented by the C-SA scheme (grey shades) with $P=\{60,100\}$, { by the D-SA (red)} or by the MF (blue).}\vspace{-0.5cm}
\end{figure*}

\subsection{User Identification and Parameter Estimation Performance}\label{sec.user-detection}

In the simulations shown in Fig. \ref{fig:uid1} and \ref{fig:uid2}, we measure the detection performance of $\widehat{\mathcal{I}}$ by the rate of successful identification $P(\widehat{\mathcal{I}} = \mathcal{I})$. In these simulations, the threshold $\eta_0$ is set to a level such that $P_f(\eta_0)\leq 0.1$, and thus we trace the curve at the point $P_f(\eta_0)=0.1$ on the figures.
The two sets of figures correspond to, respectively, the case where a receiver has partial knowledge of the active user components (as in for example the GPS receiver discussed in Section \ref{partial_user_case}) and the exact same case examined previously in Fig. \ref{fig:roc2}, where the receivers are unaware of the user $\mathcal{I}$. In the second case, for successful detection, not only the elements of the sets have to be consistent $\widehat{\mathcal{I}}\subseteq\mathcal{I}$ but also their cardinality needs to be identical $|\widehat{\mathcal{I}}|=|\mathcal{I}|$. Instead for the first case, if  the receiver has partial knowledge of the active components, then what matters is that the components are correctly identified, but their number is known ahead of time.

With a relatively short training sequence, we can see in Fig. \ref{fig:uid1} that the C-SA in the first case identifies the active user set with large probability ($0.96$ at $\mathrm{SNR}=20$ dB). The MF has  worse performance due to the multi-user interference and to the presence of unresolvable paths. In fact, the MF receiver is unable to isolate the multi-path arrivals that fall within the same symbol period and, due to the presence of different Dopplers, its side-lobes may contribute negatively to the correlation, masking other active components. In contrast, the OMP algorithm in the C-SA scheme cancels the contributions from paths detected in previous iterations, before updating the projections to search for other components (the OMP processing steps are summarized in Section \ref{complexity}). It is evident, however, that a low SNR, the C-SA scheme suffers from a loss due to the compression ($-1$ dB at the rate $0.6$ with $P=100$). { This is clearly understood by observing the performance of the D-SA receiver as well. By processing uncompressed samples with the sparse reconstruction method, the D-SA curve combines the best of both worlds and, thus, its performance bounds the user identification rate for both the MF and the C-SA receivers in both examples.} As shown in Fig. \ref{fig:uid2}, the performance degrades when the receivers do not have side information on $|\mathcal{I}|$ (a difference of $-0.13$ for the CMS receiver against $-0.3$ of the MF at $\mathrm{SNR}=20$ dB). This is due to the cardinality mismatch, $\{|\widehat{\mathcal{I}}| \neq |\mathcal{I}|\}$, that occurs while estimating the order.
%
%
%
%%%%%%%%%%%% SINGLE COLUMN VERSION %%%%%%%%%%%%%%%%%%%
%\begin{figure}[t]
%\begin{minipage}[b]{0.5\linewidth}
%\centering
%\includegraphics[width=2.3in]{uid1}
%\caption{Successful user identification rate $P(\widehat{\mathcal{I}} = \mathcal{I})$ for the order aware receiver implementing the MF (blue) receiver, { the D-SA receiver (red)}, the CMS with C-SA (grey shades) receiver with $P=\{60,80,100\} $}\label{fig:uid1}
%\end{minipage}
%\hspace{0.5cm}
%\begin{minipage}[b]{0.5\linewidth}
%\centering
%\includegraphics[width=2.3in]{uid2}
%\caption{Successful user identification rate $P(\widehat{\mathcal{I}} = \mathcal{I})$ for the order unaware receiver implementing the MF (blue) receiver, { the D-SA receiver (red)}, or the CMS (grey shades) receiver with $P=\{60,80,100\}$}\label{fig:uid2}
%\end{minipage}
%\end{figure}
%%

\begin{figure*}[!t]
\begin{center}
 %\psfrag{gamma (dB)}[Bc][Bc]{\Large $\gamma$ (dB)}
 %\psfrag{Feasibility rate (\%)}[Bc][Bc]{\Large Feasibility rate (\%)}
{\subfigure[][ROC and Identification Rate]{\resizebox{0.45\textwidth}{!}{\includegraphics{plot1}\label{fig:plot1}}}}
 \vspace{0.2cm}
{\subfigure[][Parameter RMSE]{\resizebox{0.45\textwidth}{!}{\includegraphics{plot2}\label{fig:plot2}}}}
%\vspace{+0.2cm}
\end{center}\vspace{-0.4cm}
\caption{(a) ROC curve at SNR=$-8$ dB and user identification rate $P(\widehat{\mathcal{I}} = \mathcal{I})$ of the C-SA scheme, against the MF receiver and different choices of $\mathbf{B}$'s. (b) Delay and Doppler estimation RMSE of the C-SA scheme, against different random designs of $\mathbf{B}$ and the MF.}\vspace{-0.5cm}
\end{figure*}

The accuracy of the recovered set $\widehat{\mathcal{A}}_{\ell^\star}$ is evaluated by the root mean square error (RMSE) of $\tau_{i,r}$ and $\omega_{i,r}$ that are associated with the correctly identified users $\mathcal{I} = \widehat{\mathcal{I}}$. Thus,
\begin{align}
\mathrm{RMSE}(\bdsb{\tau}) &\triangleq \sqrt{\frac{1}{|\widehat{\mathcal{I}} \cap \mathcal{I}|}\frac{1}{R}\sum_{r=1}^R\, \sum_{i \in \{\widehat{\mathcal{I}} \cap \mathcal{I}\}}\big(\tau_{i,r} - \widehat{q}_{i,r}\Delta \tau \big)^2 }\nonumber\\
\mathrm{RMSE}(\bdsb{\omega}) &\triangleq \sqrt{\frac{1}{|\widehat{\mathcal{I}} \cap \mathcal{I}|}\frac{1}{R}\sum_{r=1}^R \sum_{i \in \{\widehat{\mathcal{I}} \cap \mathcal{I}\}}  \big(\omega_{i,r} - \widehat{k}_{i,r}\Delta \omega \big)^2 }\nonumber
\end{align}
are the average RMSE of the delay parameters the Doppler frequencies respectively.
%%
%\begin{figure}[t]
%\begin{minipage}[b]{0.5\linewidth}
%\centering
%\includegraphics[width=2.5in]{uid3}
%\caption{$\text{RMSE}(\bdsb{\tau})$ as a function of $\text{SNR}=\{-12,-8,\ldots,20\}$ implemented by the C-SA scheme (grey shades) with $P=\{60,100\}$, { by the D-SA (red)} or by the MF (blue).}\label{fig:uid3}
%\end{minipage}
%\hspace{0.5cm}
%\begin{minipage}[b]{0.5\linewidth}
%\centering
%\includegraphics[width=2.5in]{uid4}
%\caption{$\text{RMSE}(\bdsb{\omega})$ as a function of $\text{SNR}=\{-12,-8,\ldots,20\}$ implemented by the C-SA scheme (grey shades) with $P=\{60,100\}$, { by the D-SA (red)} or by the MF (blue).}\label{fig:uid4}
%\end{minipage}
%\end{figure}
%%

To verify the accuracy of the parameter estimates of $\tau_{i,r}$ and $\omega_{i,r}$, we trace the RMSE's of the order-aware case. Once again we observe, from Fig. \ref{fig:uid3} and Fig. \ref{fig:uid4}, that the performance of RMSE($\bdsb{\tau}$) and RMSE($\bdsb{\omega}$) is enhanced by the detector that better leverages the presence of the multi-path. In fact, at $\text{SNR}=20$ dB, the accuracy of the CMS, with $P=100$, { and the D-SA} approach the grid resolution, i.e. $\text{RMSE}(\bdsb{\tau}) \approx \dtau$ and $\text{RMSE}(\bdsb{\omega}) \approx \dw $. Oppositely, the contribution of the unresolvable paths to the correlation, in either frequency or time, adversely affects the parameter selection. Not canceling the previously selected components contributes to a large error after the selection of the dominant paths as the same arrivals are likely to be selected more than once by the presence of correlated components. These errors impact the highest resolution since at $\text{SNR}=20$ dB: $\text{RMSE}(\bdsb{\tau})> 2\dtau$ and $\text{RMSE}(\bdsb{\omega}) > 2\dw $. Instead, at low SNR, the performance is bounded by the maximum error given by the search space which is a function of $\omega_{\max}$ and $\tau_{\max}$, respectively, due to the early detection resulting from heavy noise.

%
%\begin{figure}[t]
%\begin{minipage}[b]{0.5\linewidth}
%\centering
%\includegraphics[width=3in, height=2.5in]{plot1}
%\caption{ROC curve at SNR=$-8$ dB and user identification rate $P(\widehat{\mathcal{I}} = \mathcal{I})$ of the C-SA scheme, against the MF receiver and different choices of $\mathbf{B}$'s.}\label{fig:plot1}
%\end{minipage}
%\hspace{0.5cm}
%\begin{minipage}[b]{0.5\linewidth}
%\centering
%\includegraphics[width=3in, height=2.5in]{plot2}
%\caption{Delay and Doppler estimation RMSE of the C-SA scheme, against different random designs of $\mathbf{B}$ and the MF.}\label{fig:plot2}
%\end{minipage}
%\end{figure}
%

\subsection{Optimality of Compressive Samplers}

In this subsection, we briefly compare the performances of the C-SA scheme using a $P=100$-channel CMS architecture with the optimal samplers versus other random projection schemes in compressed sensing. The ROC curve and the user identification rate $P(\widehat{\mathcal{I}}=\mathcal{I)}$ in Fig. \ref{fig:plot1} show that the optimal sampling kernel, denoted by C-SA-KL, exhibits  better performance than random designs of $\mathbf{B}$ using matrices whose entries are Gaussian (C-SA-G), Bernoulli (C-SA-B), or randomly selected rows of a DFT matrix (C-SA-F). It can also be observed from Fig. \ref{fig:plot2} that the RMSE of the delay and Doppler estimates are also improved.

{
\section{Cost Analysis} \label{complexity}
In this section, we explicitly analyze the implementation costs of the MF and the proposed C-SA architectures in terms of storage requirement and computational complexity. The cost benefits are illustrated in two regimes respectively: the {\it analog implementation}, that corresponds to what the paper describes mathematically in detail, and a {\it digital implementation}, which would be necessary if the projections on the compressive samplers $\psi_p(t)$ cannot be implemented in the form of analog filters. The compressive procedure in that case would emulate our implementation in simulations, where Nyquist samples of the signal $x(t)$ are projected onto digital filters matched to the samples of $\psi_p(t)$.

The metric to evaluate {\it storage requirements} is given by the A/D hardware cost, measured as the number of sampling channels $P$ which is the buffer size of the A/D samples, while the {\it computational complexity} is evaluated counting the {\it total amount of complex additions and multiplications}, and by the {average CPU run time} spent on executing tasks (a 64-bit i7 920 CPU running at 2.67 GHz). In the following, we first settle on the sparse recovery solver for the SR-LRT in the C-SA scheme, and then proceed with our comparison using the chosen solver.

\begin{figure}[!t]
\begin{center}
 %\psfrag{gamma (dB)}[Bc][Bc]{\Large $\gamma$ (dB)}
 %\psfrag{Feasibility rate (\%)}[Bc][Bc]{\Large Feasibility rate (\%)}
{\subfigure[][run time v.s. the number of sampling channels $P$]{\resizebox{0.4\textwidth}{!}{\includegraphics{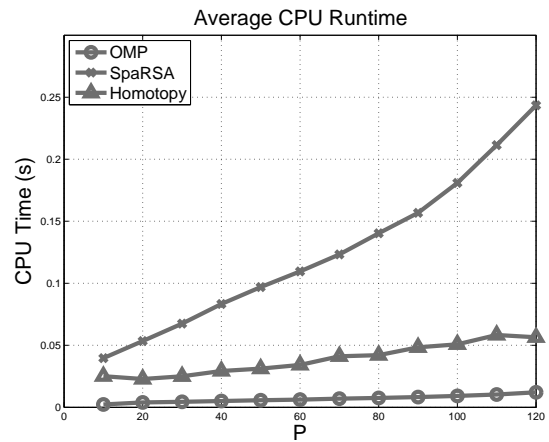}\label{fig:alg1}}}}
 \vspace{0.2cm}
{\subfigure[][run time v.s. the length of the preamble sequence $M$ ]{\resizebox{0.4\textwidth}{!}{\includegraphics{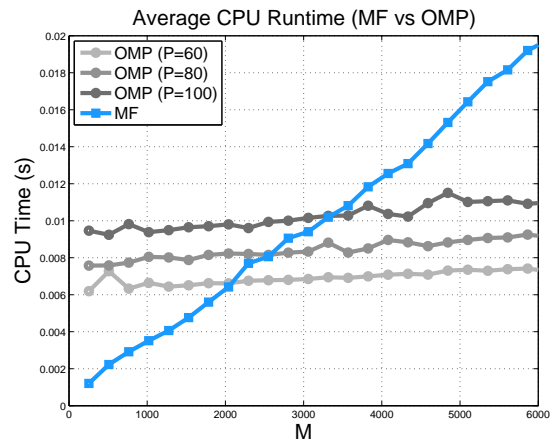}\label{fig:alg2}}}}
%\vspace{+0.2cm}
\end{center}
\vspace{-0.4cm}
\caption{(a) Average CPU run time for the CMS receiver, implemented with different solvers (OMP, SpaRSA, $\ell_1$-Homotopy), as a function of $P=\{ 10,20,\cdots,120\}$. (b) Average CPU runtime of a CMS receiver using $P=\{60,80,100\}$ compressed observations against the MF receiver. The average runtime is measured against the preamble length $M$.}%\vspace{-0.5cm}
\end{figure}

\subsection{Sparse Recovery Solver: The OMP Algorithm}
The C-SA receiver spends its greatest effort in solving the optimization \eqref{sparse_est}. Fast $\ell_1$ minimizers like SpaRSA \cite{wright2009sparse} or $\ell_1$-Homotopy \cite{salman2010dynamic,malioutov2005homotopy} are often the methods of choice. The former greatly reduces the complexity by approximating the Hessian of the gradient descent by a diagonal matrix, whereas the latter inverts a system of equations whose number of unknowns, at each iteration, remains restricted to the non-zero elements of the sparse vector estimate. Greedy algorithms like the OMP \cite{pati1993orthogonal}, are efficient approximations to the solution of sparse problems as well \cite{tropp2007signal}. The OMP algorithm iteratively detects the strongest element in the sparse vector and removes its contribution in the next iteration; thus, the number of iterations required by OMP is bounded by the maximum possible components that exist in the signal, which in our case is $|\mathcal{I}|R$. The average CPU run time spent in solving \eqref{sparse_est} with different solvers is illustrated in Fig. \ref{fig:alg1} traced against $P$, where the OMP algorithm shows significantly less computation time. Thus, in Fig. \ref{fig:alg2} we further compare the average CPU run time of the CMS receiver using OMP against the MF receiver, the implementation details of which will be discussed in the following subsection. OMP has smaller complexity primarily because it stops as soon as all the strong entries have been detected. In contrast, $\ell_1$-Homotopy and SpaRSA do not limit the search to a single set, but rather explore the feasible set by selecting and de-selecting elements of the support vector ($\ell_1$-Homotopy), or by shrinking it through a gradient descent (SpaRSA), until a desired convergence criterion has been met.

%This result motivates the selection of OMP and we proceed by breaking down its intermediate steps to compare them against the MF.

\subsection{C-SA Scheme v.s. MF Scheme}
\begin{algorithm}[htp]
\caption{C-SA Scheme}\label{CMS_algorithm}
%\begin{algorithmic}[1]
\begin{itemize}
    \setlength{\itemindent}{2em}
    \item[({\footnotesize\footnotesize\bf CMS.1})] obtain compressive samples $\mathbf{c}[n]$ at the $n$th shift;
%            \begin{itemize}
%                \item analog implementation requires $P$ sampling kernels and stores $P$ samples every $D$,\\ therefore $\mathcal{C}_{\textrm{\tiny storage}}=\mathcal{O}(P)$;
%                \item digital implementation requires computing $P$ inner products between vectors of length $\mathcal{O}(M)$,\\
%                    therefore $\mathcal{C}_{\textrm{\tiny storage}}=\mathcal{O}(P)$ and $\mathcal{C}_{\textrm{\tiny complexity}}=\mathcal{O}(MP)$;
%            \end{itemize}
	\item[({\footnotesize\bf CMS.2})] initialize $\bdsb{\beta}^0=\mathbf{0}$, $\mathcal{A}_n^0=\varnothing$, $\mathbf{\bar{S}}_0 = \mathbf{B}\mathbf{M}$, $\mathbf{S}_0 = \varnothing$, $j=1$ and run the OMP algorithm;
		 \begin{itemize}
                \item[({\footnotesize  OMP.1})] remove interference $\bdsb{\delta}^j =\mathbf{c}[n]-\mathbf{S}_{j-1}\Big[\bdsb{\beta}^{j-1}\Big]_{\mathcal{A}_n^{j-1}}$;
%                        therefore $\mathcal{C}_{\textrm{\tiny complexity}}=\mathcal{O}(P)$;
			    \item[({\footnotesize OMP.2})] projection $\bdsb{\xi}^j=\mathbf{\bar{S}}_{j-1}^T\bdsb{\delta}^j$, $\bdsb{\xi}^j=[\cdots,\xi_{i,k,q}^j,\cdots]^T$;
%                        therefore $\mathcal{C}_{\textrm{\tiny complexity}}=\mathcal{O}(I|\mathcal{K}||\mathcal{Q}|)$;
			    \item[({\footnotesize OMP.3})] detection $\mathcal{A}_n^j = \mathcal{A}_n^{j-1}\bigcup~\{(i,k,q)\}$ with $$(i,k,q) = \arg\max_{i,k,q}~|\xi_{i,k,q}^j|^2;$$
%                        therefore $\mathcal{C}_{\textrm{\tiny complexity}}=\mathcal{O}(I|\mathcal{K}||\mathcal{Q}|)$;
			    \item[({\footnotesize  OMP.4})] update $\mathbf{S}_j=\Big[\mathbf{B}\mathbf{M}\Big]_{(:,\mathcal{A}_n^j)}$ and $\mathbf{\bar{S}}_j=\Big[\mathbf{B}\mathbf{M}\Big]_{(:,\overline{\mathcal{A}_n^j})}$;
                \item[({\footnotesize  OMP.5})] update the link vector
                     \begin{align}
                        \Big[\bdsb{\beta}^j\Big]_{\mathcal{A}_n^j} &= \left(\mathbf{S}_j^T\mathbf{S}_j\right)^{-1}\mathbf{S}_j^T\mathbf{c}[n]\\
                        \Big[\bdsb{\beta}^j\Big]_{\overline{\mathcal{A}_n^j}} &= \mathbf{0};
                     \end{align}
%                     therefore $\mathcal{C}_{\textrm{\tiny complexity}}=\mathcal{O}(P^3)$;
			    \item[({\footnotesize OMP.6})]  stop if either $j=|\mathcal{I}|R$ or $\left\|\mathbf{c}[n]-\mathbf{B}\mathbf{M}\bdsb{\beta}^j\right\|<\epsilon$, and set $j=j+1$.
		\end{itemize}
	\item[({\footnotesize\bf CMS.3})] Evaluate the likelihood ratio $\eta_{\textrm{\tiny C-SA}}(n)$ in \eqref{eta_n} and check if it exceeds $\eta_0$.
       \item[({\footnotesize\bf CMS.4})] If yes, then extract components accordingly (order-aware, order-unaware).
\end{itemize}
%\end{algorithmic}
\end{algorithm}
%
%Finally, since $P\ll I|\mathcal{K}||\mathcal{Q}|$, the CMS receiver has a total cost of $\mathcal{C}_{\textrm{\tiny storage}}=\mathcal{O}(P)$ and $\mathcal{C}_{\textrm{\tiny complexity}}=\mathcal{O}(|\mathcal{I}|RI|\mathcal{K}||\mathcal{Q}|)+\mathcal{O}(|\mathcal{I}|RP^3)$ when implemented in the analog domain, while $\mathcal{C}_{\textrm{\tiny storage}}=\mathcal{O}(P)$ and $\mathcal{C}_{\textrm{\tiny complexity}}=\mathcal{O}(|\mathcal{I}|RI|\mathcal{K}||\mathcal{Q}|) + \mathcal{O}(MP)+\mathcal{O}(|\mathcal{I}|RP^3)$ when implemented in the digital domain.
%
\begin{algorithm}[htp]
\caption{MF Scheme}\label{MF_algorithm}
\begin{itemize}
    \setlength{\itemindent}{2em}
    \item[(\footnotesize \bf MF.1)] obtain the sample array $\mathbf{C}_{\textrm{\tiny MF}}[n]$ in \eqref{sample_array} from the MF filterbank;
%                                    \begin{itemize}
%                                        \item analog implementation requires $I|\mathcal{K}|$ sampling kernels and stores $I|\mathcal{K}||\mathcal{Q}|$ samples every $D$,\\ therefore $\mathcal{C}_{\textrm{\tiny                             storage}}=\mathcal{O}(I|\mathcal{K}||\mathcal{Q}|)$;
%                                        \item digital implementation computes $I|\mathcal{K}||\mathcal{Q}|$ inner products between vectors of length $\mathcal{O}(M)$,\\
%                    therefore $\mathcal{C}_{\textrm{\tiny                             storage}}=\mathcal{O}(I|\mathcal{K}||\mathcal{Q}|)$ and $\mathcal{C}_{\textrm{\tiny complexity}}=\mathcal{O}(MI|\mathcal{K}||\mathcal{Q}|)$;
%                                    \end{itemize}

    \item[(\footnotesize \bf MF.2)] identify the maximum output and check if it exceeds $\rho_i$ for all $i=1,\cdots,I$;
%                                    therefore $\mathcal{C}_{\textrm{\tiny complexity}}=\mathcal{O}(I|\mathcal{K}||\mathcal{Q}|)$;
    \item[(\footnotesize \bf MF.3)] If yes, then extract the delay-Doppler set for each active user as in Section \ref{sampling_acquisition}.
\end{itemize}
\end{algorithm}
%Thus, the MF receiver has a total cost of $\mathcal{C}_{\textrm{\tiny storage}}=\mathcal{O}(I|\mathcal{K}||\mathcal{Q}|)$ and $\mathcal{C}_{\textrm{\tiny complexity}}=\mathcal{O}(I|\mathcal{K}||\mathcal{Q}|)$ when implemented in the analog domain, while $\mathcal{C}_{\textrm{\tiny storage}}=\mathcal{O}(I|\mathcal{K}||\mathcal{Q}|)$ and $\mathcal{C}_{\textrm{\tiny complexity}}=\mathcal{O}(MI|\mathcal{K}||\mathcal{Q}|)$ when implemented in the digital domain.

Using the OMP algorithm for sparse recovery, we summarize the steps of the C-SA and MF schemes in Algorithms \ref{CMS_algorithm} and \ref{MF_algorithm} respectively. Based on the algorithm descriptions, we provided the order of storage cost and computational complexities in the tables below. Storage accounts for a {\it data path storage} component, dynamically updated with the streaming data that correspond to new observations to be processed, and for a {\it static component}, that stores filters or sampling kernels parameters needed to perform signal processing on the data.
\begin{figure*}[htp]
%\begin{table}
\centering
\begin{tabular}{l*{6}{l}r}
\toprule
\toprule
%\textbf{ }              &  &  \\
\cmidrule(r){1-1}
\textbf{CMS Receiver}     &  \textbf{Storage} (analog) & \textbf{Complexity} (analog)  & \textbf{Storage} (digital) & \textbf{Complexity} (digital) \\
\midrule
($\bf{CMS.1}$)      & $\mathcal{O}(P)$  &  $0$   & $\mathcal{O}(MP)$ & $\mathcal{O}(MP)$  \\
($\bf{CMS.2}$)      & $0$  &   $\mathcal{O}(I|\mathcal{K}||\mathcal{Q}|)+\mathcal{O}(P^3)$  & $0$  &   $\mathcal{O}(I|\mathcal{K}||\mathcal{Q}|)+\mathcal{O}(P^3)$\\
($\bf{CMS.3}$)      & $\mathcal{O}(P) $     &  $\mathcal{O}(P)$ & $\mathcal{O}(P) $     &  $\mathcal{O}(P)$ \\
Total      & $\mathcal{O}(P) $     &  $\mathcal{O}(I|\mathcal{K}||\mathcal{Q}|)+\mathcal{O}(P^3)$     &  $\mathcal{O}(MP)$ & $\mathcal{O}(I|\mathcal{K}||\mathcal{Q}|)+\mathcal{O}(P^3)$ \\
\bottomrule
\bottomrule
\end{tabular}
\caption{Complexity breakdown for the CMS receivers at every shift $t=nD$.}\label{tab1:cs_comp}
%\end{table}
\end{figure*}
\begin{figure*}[htp]
%\begin{table}
\centering
\begin{tabular}{l*{6}{l}r}
\toprule
\toprule
%\textbf{ }              &  &  \\
\cmidrule(r){1-1}
\textbf{MF Receiver}     &  \textbf{Storage} (analog) & \textbf{Complexity} (analog)  & \textbf{Storage} (digital) & \textbf{Complexity} (digital) \\
\midrule
($\bf{MF.1}$)      & $\mathcal{O}(I|\mathcal{K}||\mathcal{Q}|)$  &  $0$   & $\mathcal{O}(MI|\mathcal{K}||\mathcal{Q}|)$ & $\mathcal{O}(MI|\mathcal{K}||\mathcal{Q}|)$  \\
($\bf{MF.2}$)      & $0$  &   $\mathcal{O}(I|\mathcal{K}||\mathcal{Q}|)$  & $0$  &   $\mathcal{O}(I|\mathcal{K}||\mathcal{Q}|)$ \\
Total      & $\mathcal{O}(I|\mathcal{K}||\mathcal{Q}|) $     &  $\mathcal{O}(I|\mathcal{K}||\mathcal{Q}|)$     &  $\mathcal{O}(MI|\mathcal{K}||\mathcal{Q}|)$ & $\mathcal{O}(MI|\mathcal{K}||\mathcal{Q}|)$ \\
\bottomrule
\bottomrule
\end{tabular}
\caption{Complexity breakdown for the MF receivers at every shift $t=nD$.}\label{tab2:mf_compx}
%\end{table}
\end{figure*}
}

{
It is seen in Fig. \ref{tab1:cs_comp} and \ref{tab2:mf_compx} that both our  C-SA scheme with the CMS receiver and the MF receiver (using exhaustive matched filtering) have computational complexities that scale linearly with the dimension of the search space $I|\mathcal{K}||\mathcal{Q}|$. However, the  {\it data path storage} of the C-SA receivers is greatly reduced. Another storage gain is found in the case of digital implementation, because there are fewer projections to be made and thus, in principle, unless the MF are synthesized on the fly, a smaller amount of static memory is required to store the samples of $\psi_p(t)$.

When the architectures are implemented in the digital domain, the C-SA receiver also leads to a great reduction in computational complexity, with an approximate ratio with the MF receiver complexity of
\begin{align}
	\frac{MP + I|\mathcal{K}||\mathcal{Q}| + P^3}{MI|\mathcal{K}||\mathcal{Q}|} \approx \frac{P}{I|\mathcal{K}||\mathcal{Q}|} +  \frac{P^3}{MI|\mathcal{K}||\mathcal{Q}|}.
\end{align}
Clearly, when the preamble sequence is long and the search space is large $MI|\mathcal{K}||\mathcal{Q}| \gg P^3$, this ratio becomes less than $1$ and the C-SA architecture leads to computational savings while, as seen in our simulations, maintaining comparable performance. This is also why the C-SA receiver implemented in the simulation (see Fig. \ref{fig:alg2}) considerably outperforms the MF receiver in terms of average CPU run time for large $M\geq 3000$ with $P=60,80,100$. On the other hand, when $M$ is small (e.g. $M=255$ in Section \ref{numerical_results}), the MF receiver has less computation time for $M<3000$ against $P=60,80,100$ as in Fig. \ref{fig:alg2}, but such a short preamble does not provide sufficient processing gain for reliable link acquisition in the presence of multipath, as can be clearly seen from the numerical results (e.g., see Fig. \ref{fig:uid1}). Thus, when $M$ is small, the gain of the C-SA receiver also lies in the superior acquisition performance demonstrated by the numerical results, except for the low SNR region where the C-SA is not sufficiently sensitive.
}

\section{Conclusions}
In this paper, we proposed the SR-LRT receiver using a unified CMS architecture for link acquisition, which we refer to as the C-SA scheme. This scheme uses a sequential SR-LRT that jointly detects signal presence and recovers the active users with their link parameters. We optimized the CMS architecture to maximize the average Kullback-Leibler distance among the hypotheses tested in the SR-LRT and show that, with the optimal compressive samplers we propose, the receiver detection outperforms those with conventional compressed sensing alternatives. Furthermore, through the numerical comparison of the proposed architecture with the D-SA scheme and the MF approach, we have shown that the C-SA receiver can scale down its processing storage and complexity with greater flexibility, while maintaining satisfactory performance.

\appendices

%\section{Chapter \ref{Compressive_Data_Acquisition}}
%\subsection{Proof of Theorem \ref{observation_model}}\label{theorem_1}
\section{Proof of Theorem \ref{observation_model}}\label{theorem_1}
Substituting \eqref{sparse_sig} into \eqref{gen_sampling}, we have
%%%% single column
\begin{align}\label{gen_sample}
	c_p[n] &= \sum_{i=1}^{I} \sum_{k\in\mathcal{K}}\sum_{q\in\mathcal{Q}}\alpha_{i,k,q}e^{\mathrm{i}k\Delta\omega \ell D}\ip{\phi_{i,k,q}(t-\ell D)}{\psi_p(t-nD)}\nonumber\\
	&+\ip{v(t)}{\psi_p(t-nD)}.
\end{align}
%\begin{align}\label{gen_sample}
%	c_p[n] &= \sum_{i=1}^{I} \sum_{k\in\mathcal{K}}\sum_{q\in\mathcal{Q}}\alpha_{i,k,q}e^{\mathrm{i}k\Delta\omega \ell D}\ip{\phi_{i,k,q}(t-\ell D)}{\psi_p(t-nD)}\nonumber\\
%	&~~~+\ip{v(t)}{\psi_p(t-nD)}.
%\end{align}
Define the $P\times I|\mathcal{K}||\mathcal{Q}|$ matrix
%%% single column
	\begin{align}\label{lem_1_app}
		 \Big[\mathbf{M}_{\psi\phi}[n-\ell]\Big]_{p, (i,k,q)}
		&= R_{\psi_p\phi_{i,k,q}}\left[(n-\ell)D\right]\\
		&\triangleq \ip{\phi_{i,k,q}(t-\ell D)}{\psi_p(t-nD)}
	\end{align}
%	\begin{align}\label{lem_1_app}
%		 \left[\mathbf{M}_{\psi\phi}[n-\ell]\right]_{p,\mu_{(i,k,q)}}
%		&= R_{\psi_p\phi_{i,k,q}}\left[(n-\ell)D\right]\\
%		&\triangleq \ip{\phi_{i,k,q}(t-\ell D)}{\psi_p(t-nD)}
%	\end{align}
and denote $v_p[n]\triangleq\ip{v(t)}{\psi_p(t-nD)}$ as the sample of filtered noise, whose covariance can be obtained as $$\mathbb{E}\{v_p[n]v_{p'}^\ast[n]\} = \sigma^2 \ip{\psi_p(t)}{\psi_{p'}(t)}$$ using $\mathbb{E}\{v(t)v^\ast(s)\}=\sigma^2\delta(t-s)$. Therefore, the noise vector $\bdsb{\nu}[n]\triangleq [\nu_1[n],\cdots,\nu_P[n]]^T$ has a covariance matrix obtained as $\mathbf{R}_{vv} = \mathbb{E}\{\bdsb{\nu}[n]\bdsb{\nu}^H[n]\}=\sigma^2\mathbf{R}_{\psi\psi}$ where $$[\mathbf{R}_{\psi\psi}]_{p,p'}\triangleq\ip{\psi_p(t)}{\psi_{p'}(t)}$$ is the Gram matrix of the kernels $\psi_p(t)$'s.

Denote $\mathbf{c}[n]\triangleq \left[c_1[n],\cdots,c_P[n]\right]^T$ as the length-$P$ vector containing the samples acquired from the CMS filterbank at time $t=nD$. Given the link vector ${ \bdsb{\alpha}[\ell]}=[\cdots,\alpha_{i,k,q},\cdots]^T$ at the $\ell$th shift as \eqref{index_mu}, we then have the observation model in matrix form
\begin{align}
	\mathbf{c}[n] = \mathbf{M}_{\psi\phi}[n-\ell]\bdsb{\Gamma}[\ell]{ \bdsb{\alpha}[\ell]} +\bdsb{\nu}[n].
\end{align}

Using a sampling kernel constructed as
 \begin{align}
	 \psi_p(t) = \sum_{i=1}^{I}\sum_{k\in\mathcal{K}}\sum_{q\in\mathcal{Q}} b_{p,(i,k,q)} \phi_{i,k,q}(t),
\end{align}
the cross-correlation $R_{\psi_p\phi_{i,k,q}}[(n-\ell)D]$ in \eqref{lem_1_app} can be evaluated as
\begin{align}\label{R_psiphi}
	& R_{\psi_p\phi_{i,k,q}}[(n-\ell)D]\\
	&= \sum_{i'=1}^{I}\sum_{q'\in\mathcal{Q}}\sum_{k'\in\mathcal{K}} b_{p,(i',k',q')} R_{\phi_{i',k',q'}\phi_{i,k,q}}[(n-\ell)D],\nonumber
\end{align}
%\begin{align}\label{R_psiphi}
%	& R_{\psi_p\phi_{i,k,q}}[(n-\ell)D] \nonumber\\
%	&= \sum_{i'=1}^{I}\sum_{q'\in\mathcal{Q}}\sum_{k'\in\mathcal{K}} b_{p,(i',k',q')} R_{\phi_{i',k',q'}\phi_{i,k,q}}[(n-\ell)D].
%\end{align}
where 
\begin{align*}
	R_{\phi_{i',k',q'}\phi_{i,k,q}}[(n-\ell)D]=\ip{\phi_{i,k,q}(t-\ell D)}{\phi_{i',k',q'}(t-nD)}. 
\end{align*}
With a change of variable $t'=t-nD-q'\Delta\tau$, the correlation can be expressed as
\begin{align}\label{R_phiphi}
	R_{\phi_{i',k',q'}\phi_{i,k,q}}[(n-\ell)D] & = e^{\mathrm{i}k\Delta\omega(n-\ell)D}e^{-jk\Delta\omega q'\Delta\tau}\\
	&\times ~R_{\phi_{i'},\phi_i}^{(k-k')}\left[(q'-q)\Delta\tau + (n-\ell)D\right],\nonumber
\end{align}
%\begin{align}\label{R_phiphi}
%	R_{\phi_{i',k',q'}\phi_{i,k,q}}[(n-\ell)D] & = e^{\mathrm{i}k\Delta\omega(n-\ell)D}e^{-jk\Delta\omega q'\Delta\tau}\\
%	& \times R_{\phi_{i'},\phi_i}^{(k-k')}\left[(q'-q)\Delta\tau + (n-\ell)D\right],\nonumber
%\end{align}
where $R_{\phi_{i'},\phi_i}^{(k-k')}(\Delta t)$ is the ambiguity function
\begin{align}
	R_{\phi_{i'},\phi_i}^{(k-k')}(\Delta t) = \int \phi_{i'}^\ast(t)\phi_i(t-\Delta t)e^{-\mathrm{i}(k-k')\Delta\omega t}\mathrm{d}t.
\end{align}
From \eqref{R_psiphi}, $\mathbf{M}_{\psi\phi}[n-\ell]$ in \eqref{lem_1_app} can be re-written as 
\begin{align}
	\mathbf{M}_{\psi\phi}[n-\ell] = \mathbf{B}\mathbf{M}_{\phi\phi}[n-\ell], 
\end{align}	
where
\begin{align*}
	 \Big[\mathbf{M}_{\phi\phi}[n-\ell]\Big]_{(i',k',q'),(i,k,q)}=R_{\phi_{i',k',q'}\phi_{i,k,q}}[(n-\ell)D].
\end{align*}
Then the observation model can be re-written as
\begin{align}
	\mathbf{c}[n] = \mathbf{B} \mathbf{M}_{\phi\phi}[n-\ell]\bdsb{\Gamma}[\ell]{ \bdsb{\alpha}[\ell]} +\bdsb{\nu}[n].
\end{align}
Finally, the Gram matrix of $\psi_p(t)$'s is obtained accordingly as
\begin{align}
	\mathbf{R}_{\psi\psi}=\mathbf{B}\mathbf{M}_{\phi\phi}[0]\mathbf{B}^H,
\end{align}
which gives the noise covariance as $\mathbf{R}_{vv} = \sigma^2\mathbf{B}\mathbf{M}_{\phi\phi}[0]\mathbf{B}^H$.

%\subsection{Proof of Theorem \ref{observation_model_reform}}\label{theorem_2}
\section{Proof of Theorem \ref{observation_model_reform}}\label{theorem_2}
From \eqref{gen_sample}, each sample $c_p[n]$ from the CMS architecture $p=1,\cdots,P$ can be expressed as
\begin{align}\label{sample}
	c_p[n] &=
	 \sum_{i'=1}^{I}\sum_{k'\in\mathcal{K}}\sum_{q'\in\mathcal{Q}} b_{p,(i',k',q')} \sum_{i=1}^{I}\sum_{k\in\mathcal{K}}\sum_{q\in\mathcal{Q}}\alpha_{i,k,q}e^{\mathrm{i}k\Delta\omega\ell D}\nonumber\\
	&~\times R_{\phi_{i',k',q'}\phi_{i,k,q}}[(n-\ell)D] + v_p[n].
\end{align}
%\begin{align}\label{sample}
%	c_p[n] &=
%	 \sum_{i'=1}^{I}\sum_{k'\in\mathcal{K}}\sum_{q'\in\mathcal{Q}} b_{p,(i',k',q')} \sum_{i=1}^{I}\sum_{k\in\mathcal{K}}\sum_{q\in\mathcal{Q}}\alpha_{i,k,q}e^{\mathrm{i}k\Delta\omega\ell D}\nonumber\\
%	& ~~~ \times R_{\phi_{i',k',q'}\phi_{i,k,q}}[(n-\ell)D] + v_p[n].
%\end{align}
The summation $\sum_{q\in\mathcal{Q}}  \alpha_{i,k,q}e^{\mathrm{i}k\Delta\omega\ell D}R_{\phi_{i',k',q'}\phi_{i,k,q}}[(n-\ell)D]$ can be adjusted with respect to the relative time index $[n-\ell]$ by re-writing the correlation in \eqref{R_phiphi} as
\begin{align}\label{R_phiphi_exp}
	 &R_{\phi_{i',k',q'}\phi_{i,k,q}}[(n-\ell)D]\\
	 &=e^{\mathrm{i}k\Delta\omega(n-\ell)D}e^{-jk\Delta\omega q'\Delta\tau}
	 R_{\phi_{i'},\phi_i}^{(k-k')}\left[(q'-q)\Delta\tau + (n-\ell)D\right]\nonumber\\
	 &=e^{\mathrm{i}k\Delta\omega(n-\ell)D}e^{-jk\Delta\omega q'\Delta\tau}
	 R_{\phi_{i'},\phi_i}^{(k-k')}\left[(q'-[q+(\ell-n)N])\Delta\tau\right]\nonumber\\
        &=e^{\mathrm{i}k\Delta\omega(n-\ell)D} R_{\phi_{i',k',q'}\phi_{i,k,q+(\ell-n)N}}[0].\nonumber
\end{align}
%\begin{align}\label{R_phiphi_exp}
%	 &R_{\phi_{i',k',q'}\phi_{i,k,q}}[(n-\ell)D]\\
%	 &=e^{\mathrm{i}k\Delta\omega(n-\ell)D}e^{-jk\Delta\omega q'\Delta\tau}
%	 R_{\phi_{i'},\phi_i}^{(k-k')}\left[(q'-q)\Delta\tau + (n-\ell)D\right]\nonumber\\
%	 &=e^{\mathrm{i}k\Delta\omega(n-\ell)D}e^{-jk\Delta\omega q'\Delta\tau}
%	 R_{\phi_{i'},\phi_i}^{(k-k')}\left[(q'-[q- (n-\ell)D/\Delta\tau])\Delta\tau\right]\nonumber\\
%        &=e^{\mathrm{i}k\Delta\omega(n-\ell)D} R_{\phi_{i',k',q'}\phi_{i,k,q-(n-\ell)D/\Delta\tau}}[0].
%\end{align}
Without loss of generality, let $D/\Delta\tau=N\in\mathbb{Z}$. With a change of variable $q''=q+(\ell-n)N$ and substituting the equivalent correlation in \eqref{R_phiphi_exp}, we have
\begin{align*}
	&\sum_{q\in\mathcal{Q}} \alpha_{i,k,q}e^{\mathrm{i}k\Delta\omega\ell D}R_{\phi_{i',k',q'}\phi_{i,k,q}}[(n-\ell)D]\\
	&\overset{\eqref{R_phiphi_exp}}{=}\sum_{q\in\mathcal{Q}} \alpha_{i,k,q} e^{\mathrm{i}k\Delta\omega n D}  R_{\phi_{i',k',q'}\phi_{i,k,q+(\ell-n)N}}[0]\\
	&=\sum_{q''\in\mathcal{Q}} \alpha_{i,k,q''+(n-\ell)N} e^{\mathrm{i}k\Delta\omega n D} R_{\phi_{i',k',q'}\phi_{i,k,q''}}[0]. %~ q'' = q+(\ell-n)N.
\end{align*}
%\begin{align*}
%	&\sum_{q\in\mathcal{Q}} \alpha_{i,k,q}e^{\mathrm{i}k\Delta\omega\ell D}R_{\phi_{i',k',q'}\phi_{i,k,q}}[(n-\ell)D] \\
%	&\overset{\eqref{R_phiphi_exp}}{=}\sum_{q\in\mathcal{Q}} \alpha_{i,k,q}e^{\mathrm{i}k\Delta\omega n D} R_{\phi_{i',k',q'}\phi_{i,k,q-(n-\ell)D/\Delta\tau}}[0]\\
%	&=\sum_{q''\in\mathcal{Q}} \alpha_{i,k,q''+(n-\ell)D/\Delta\tau}e^{\mathrm{i}k\Delta\omega n D} R_{\phi_{i',k',q'}\phi_{i,k,q''}}[0].
%\end{align*}
With the re-formulation, \eqref{sample} is re-written as below
\begin{align*}
	c_p[n] &=
	 \sum_{i'=1}^{I}\sum_{k'\in\mathcal{K}}\sum_{q'\in\mathcal{Q}} b_{p,(i',k',q')} \sum_{i=1}^{I}\sum_{k\in\mathcal{K}}\sum_{q\in\mathcal{Q}}\alpha_{i,k,q+(n-\ell)N} 
	 \\
    &\times ~ e^{\mathrm{i}k\Delta\omega n D} R_{\phi_{i',k',q'}\phi_{i,k,q}}[0]+ v_p[n].
\end{align*}
By letting $\mathbf{M}\triangleq  \mathbf{M}_{\phi\phi}[0]$ and  defining the shifted link vector $\bdsb{\alpha}[n]$ at the $n$th shift as
\begin{align}
 	\Big[\bdsb{\alpha}[n]\Big]_{(i,k,q)} \triangleq \alpha_{i,k,q+(n-\ell)N}, % \ e^{-\mathrm{i}k\Delta\omega n D}.
\end{align}
the observation model can be equivalently re-written as
\begin{align}
	\mathbf{c}[n] = \mathbf{B}\mathbf{M}\bdsb{\Gamma}[n]\bdsb{\alpha}[n] +\bdsb{\nu}[n].
\end{align}

\section{Proof of Proposition 1}\label{KL_max}
The pair-wise KL distance in \eqref{pair-wise_KL} can be re-written with the trace operator $\mathrm{Tr}(\cdot)$ below
\begin{align*}
	 %&\mathbb{D}\left(P_{\mathcal{H}_{\mathcal{S}}},P_{\mathcal{H}_{\mathcal{S}'}}|\bdsb{\beta}_{\mathcal{S}},\bdsb{\beta}_{\mathcal{S}'},\sigma^2\right)\nonumber\\
	 \mathbb{D}\left(\mathcal{H}_{\mathcal{S}}\|\mathcal{H}_{\mathcal{S}'}\right)
	 &=  \frac{1}{\sigma^2}\mathrm{Tr}\left[\mathbf{M}^H\mathbf{B}^H\left(\mathbf{B}\mathbf{M}\mathbf{B}^H\right)^{-1}\mathbf{B}\mathbf{M}\mathbf{R}_{\bdsb{\beta}_{\mathcal{S}},\bdsb{\beta}_{\mathcal{S}'}}\right],
\end{align*}
where $\mathbf{R}_{\bdsb{\beta}_{\mathcal{S}},\bdsb{\beta}_{\mathcal{S}'}} \triangleq \left(\bdsb{\beta}_{\mathcal{S}}-\bdsb{\beta}_{\mathcal{S}'}\right)\left(\bdsb{\beta}_{\mathcal{S}}-\bdsb{\beta}_{\mathcal{S}'}\right)^H$. Then the average pair-wise KL distance $\overline{\mathbb{D}}$ in \eqref{avg_KL} becomes
	\begin{align}
		 \overline{\mathbb{D}}
		 &=\frac{1}{\sigma^2}\mathrm{Tr}\left[\mathbf{M}^H\mathbf{B}^H\left(\mathbf{B}\mathbf{M}\mathbf{B}^H\right)^{-1}\mathbf{B}\mathbf{M}\mathbf{R}\right],\nonumber
	\end{align}
	where $\mathbf{R} \triangleq \sum_{\mathcal{S}}\sum_{\mathcal{S}'}\gamma_{\mathcal{S},\mathcal{S}'} \mathbf{R}_{\mathcal{S},\mathcal{S}'}$ and $\mathbf{R}_{\mathcal{S},\mathcal{S}'}$ is the averaged covariance matrix of $\bdsb{\beta}_{\mathcal{S}}$ over the amplitudes
	\begin{align*}
		\mathbf{R}_{\mathcal{S},\mathcal{S}'} &=\iint P(\bdsb{\beta}_{\mathcal{S}})P(\bdsb{\beta}_{\mathcal{S}'})
		 \mathbf{R}_{\bdsb{\beta}_{\mathcal{S}},\bdsb{\beta}_{\mathcal{S}'}}\mathrm{d}\bdsb{\beta}_{\mathcal{S}}\mathrm{d}\bdsb{\beta}_{\mathcal{S}'}.
	\end{align*}
Given $P(\bdsb{\beta}_{\mathcal{S}})=\prod_{(i,k,q)\in\mathcal{S}}P(\beta_{i,k,q})$ with $\int \bdsb{\beta}_{\mathcal{S}} P(\bdsb{\beta}_{\mathcal{S}})\mathrm{d}\bdsb{\beta}_{\mathcal{S}}=\mathbf{0}$ and $\int |\beta_{i,k,q}|^2 P(\beta_{i,k,q})\mathrm{d}\beta_{i,k,q}= \sigma_\beta^2$, the averaged matrix $\mathbf{R}_{\mathcal{S},\mathcal{S}'}$ is diagonal. Furthermore, if the set of weights $\gamma_{\mathcal{S},\mathcal{S}'}$ are constant for all $\mathcal{S},\mathcal{S}'$ and the individual weighting function $P(\beta_{i,k,q})$ is identical for all $i,k,q$, it also satisfies $\mathbf{R}=\sigma_\beta^2 \mathbf{I}$ because the summation over $\mathcal{S},\mathcal{S}'$ is symmetric, and hence produces equal sum. Thus the result follows.

%\subsection{Proof of Theorem \ref{KL_opt}}\label{theorem_3}
\section{Proof of Theorem \ref{KL_opt}}\label{theorem_3}
By analogy with Lemma \ref{ratio_trace}, we have $\mathbf{S}=\mathbf{M}$ and $\mathbf{G}=\mathbf{M}\mathbf{M}^H$ in \eqref{avg_D}. Let
\begin{align}	
	\mathbf{B} \triangleq
	\begin{bmatrix}
		\mathbf{b}_1 & \cdots &
		\mathbf{b}_P
	\end{bmatrix}^H,
\end{align}	
where $\mathbf{b}_p$ is a length-$I|\mathcal{K}||\mathcal{Q}|$ column vector with $\mathbf{b}_p=\mathbf{w}_p$. In this setting, according to Lemma \ref{ratio_trace}, the optimal $\mathbf{b}_p$ is chosen as the generalized eigenvector of the matrix pair $(\mathbf{S},\mathbf{G})$ such that $\mathbf{M}\mathbf{b}_p = \lambda_p\mathbf{M}\mathbf{M}^H\mathbf{b}_p$. Using the eigen-decomposition of $\mathbf{M}=\mathbf{U}\bdsb{\Sigma}\mathbf{U}^H$ and the property $\mathbf{U}^H\mathbf{U}=\mathbf{I}$, we have
\begin{align}
	\bdsb{\Sigma}\mathbf{U}^H\mathbf{b}_p =
	 \lambda_p\bdsb{\Sigma}\bdsb{\Sigma}^H\mathbf{U}^H\mathbf{b}_p,\quad p=1,\cdots,P.
\end{align}
If we choose $\mathbf{b}_p=\mathbf{u}_p$, where $\mathbf{u}_p$ is the $p$th column in the matrix $\mathbf{U}$, then the above relationship holds for all $p=1,\cdots,P$ as long as $P\leq \mathrm{rank}(\bdsb{\Sigma})$ because $\mathbf{u}_{i}^H\mathbf{u}_{j}=\delta[i-j]$. This gives
\begin{align*}
	&\mathrm{L.H.S.~:}~~\sigma_p\mathbf{U}^H\mathbf{u}_p = \sigma_p\mathbf{e}_p, \\
	 &\mathrm{R.H.S.~:}~~\lambda_p\bdsb{\Sigma}\bdsb{\Sigma}^H\mathbf{U}^H\mathbf{u}_p=\lambda_p\sigma_p^2\mathbf{e}_p,
\end{align*}
leading to a generalized eigenvalue of $\lambda_p=1/\sigma_p$, where $\sigma_p > 0$ is the $p$th eigenvalue in $\bdsb{\Sigma}$ and $\mathbf{e}_p$ is the canonical basis with $1$ in the $p$th entry and $0$ otherwise. Denote by $\bdsb{\Sigma}_P$ and $\mathbf{U}_P$ the principal eigenvalue and eigenvector matrix. Then the optimal $\mathbf{B}$ is chosen as 
\begin{align}
	\mathbf{B}=\bdsb{\Xi}_P\mathbf{U}_P^H, 
\end{align}	
where $\bdsb{\Xi}_P$ is an arbitrary non-singular $P\times P$ matrix. According to \eqref{avg_D}, this choice gives
\begin{align*}
	 %&\mathrm{Tr}\left[\mathbf{M}^H\mathbf{B}^H\left(\mathbf{B}\mathbf{M}\mathbf{B}^H\right)^{-1}\mathbf{B}\mathbf{M}\right]\\
	 \overline{\mathbb{D}}
	 =& \frac{\sigma_\beta^2}{\sigma^2}\mathrm{Tr}\left(\bdsb{\Sigma}_P^H\underbrace{\bdsb{\Xi}_P^H \bdsb{\Xi}_P^{-H}}_{=\mathbf{I}}\bdsb{\Sigma}_P^{-1}\underbrace{\bdsb{\Xi}_P^{-1}\bdsb{\Xi}_P}_{=\mathbf{I}}\bdsb{\Sigma}_P\right) = \frac{\sigma_\beta^2}{\sigma^2}\sum_{p=1}^{P}\sigma_p,	
\end{align*}
which is independent of $\bdsb{\Xi}_P$. If the principal eigenvectors $\mathbf{U}_P$ are unique, the above $\mathbf{B}$ maximizes the average KL distance $\overline{\mathbb{D}}$. This choice of $\mathbf{B}$ in general spreads out the individual KL distance, while the occurence of the events $\mathbb{D}\left(\mathcal{H}_{\mathcal{S}}\|\mathcal{H}_{\mathcal{S}'}\right) = 0$ is analyzed below. So is the case when $\mathbf{U}_P$ is not unique.

Now we examine the occurrence of $\mathbb{D}\left(\mathcal{H}_{\mathcal{S}}\|\mathcal{H}_{\mathcal{S}'}\right) = 0$. Let $\bdsb{\beta}_{\mathcal{S}\cup\mathcal{S}'}=\left(\bdsb{\beta}_{\mathcal{S}}-\bdsb{\beta}_{\mathcal{S}'}\right)$ be a sparse vector with $|\mathcal{S}|,|\mathcal{S}'|\leq s$, and $s\leq |\mathcal{I}|R$. Substituting $\mathbf{B} = \bdsb{\Xi}_P\mathbf{U}_P^H$ back to \eqref{pair-wise_KL} and simplifying the expression, the individual KL distance is
\begin{align}
	 \mathbb{D}\left(\mathcal{H}_{\mathcal{S}}\|\mathcal{H}_{\mathcal{S}'}\right)
	&= \frac{1}{\sigma^2}\bdsb{\beta}_{\mathcal{S}\cup\mathcal{S}'}^H\mathbf{U}_P\bdsb{\Sigma}_P\mathbf{U}_P^H\bdsb{\beta}_{\mathcal{S}\cup\mathcal{S}'},\\ &~~~~\forall \mathcal{S}\neq\mathcal{S}',~|\mathcal{S}|,|\mathcal{S}'|\leq s.
\end{align}
Since $\bdsb{\beta}_{\mathcal{S}\cup\mathcal{S}'}$ is at most a $2s$-sparse vector, thus $\mathbb{D}\left(\mathcal{H}_{\mathcal{S}}\|\mathcal{H}_{\mathcal{S}'}\right)$ is bounded away from zero as long as any $2s$-sparse vectors do not fall into the null space of the matrix $\mathbf{U}_P^H$, which implicitly implies $P\geq 2s$. In order to minimize the occurrence of the event $\mathbb{D}\left(\mathcal{H}_{\mathcal{S}}\|\mathcal{H}_{\mathcal{S}'}\right) = 0$ given a certain level of signal sparsity $s$, it is equivalent to maximizing the kruskal rank of the matrix $\mathbf{U}_P^H$ such that the matrix $\mathbf{B}$ can recover any $s$-sparse vector $\bdsb{\beta}_{\mathcal{S}}$ with $s$ as large as possible. This is consistent with the popular results in compressed sensing, therefore when the solution obtained from the optimization is not unique, one can use this as a criterion to choose the best candidate from the solutions of $\mathbf{B}$ that maximize the average KL distance.

\bibliographystyle{IEEEtran}

%\bibliography{/Users/leakshell/Dropbox/Shared/Simon_Anna/ref_general,/Users/leakshell/Dropbox/Shared/Simon_Anna/ref_CS_application,/Users/leakshell/Dropbox/Shared/Simon_Anna/ref_CS_theory_alg,/Users/leakshell/Dropbox/Shared/Simon_Anna/ref_FRI_UoS,/Users/leakshell/Dropbox/Shared/Simon_Anna/ref_SS_comm}

%\bibliography{/Users/XL/Dropbox/Shared/Simon_Anna/ref_general,/Users/XL/Dropbox/Shared/Simon_Anna/ref_CS_application,/Users/XL/Dropbox/Shared/Simon_Anna/ref_CS_theory_alg,/Users/XL/Dropbox/Shared/Simon_Anna/ref_FRI_UoS,/Users/XL/Dropbox/Shared/Simon_Anna/ref_SS_comm}

%\bibliography{E:/Dropbox/Shared/Simon_Anna/ref_general,E:/Dropbox/Shared/Simon_Anna/ref_CS_application,E:/Dropbox/Shared/Simon_Anna/ref_CS_theory_alg,E:/Dropbox/Shared/Simon_Anna/ref_FRI_UoS,E:/Dropbox/Shared/Simon_Anna/ref_SS_comm}

\bibliography{../../ref_general,../../ref_CS_application,../../ref_CS_theory_alg,../../ref_FRI_UoS,../../ref_SS_comm}

\end{document}